\documentclass[12pt,a4paper]{article}
\usepackage{tgpagella}
\usepackage[utf8]{inputenc}
\usepackage[top=1in, bottom=1in, left=1in, right=1in]{geometry}
\usepackage{mathtools, amsfonts, amssymb, amsthm, graphicx, xcolor, dsfont, hyphenat} 
\usepackage[square, sort, comma, longnamesfirst]{natbib}
\usepackage[toc, page, title]{appendix}
\usepackage[colorlinks=false, citebordercolor = cyan, linkbordercolor = cyan]{hyperref}
\usepackage[labelfont=bf]{caption}
\hypersetup{linkbordercolor=cyan}
\def\spacingset#1{\renewcommand{\baselinestretch}%
{#1}\small\normalsize} \spacingset{1}
\DeclareMathOperator*{\E}{\mathbb{E}}

\DeclareMathOperator*{\argmin}{arg\,min}

\newcommand{\abs}[1]{\left\lvert#1\right\rvert}
\newcommand{\norm}[1]{\left\lVert#1\right\rVert}
\newcommand{\inner}[1]{\left\langle#1\right\rangle}

\newtheorem{theorem}{Theorem}[section]
\newtheorem*{theorem*}{Theorem}
\newtheorem{proposition}{Proposition}[section]
\newtheorem{lemma}{Lemma}[section]

\usepackage{neuralnetwork}
\newcommand*{\Scale}[2][4]{\scalebox{#1}{$#2$}}
\usepackage[ruled,vlined]{algorithm2e}
\usepackage{setspace} 
\usepackage{booktabs, longtable}

\title{The Kernel Trick for Nonlinear Factor Modeling\thanks{\scriptsize 
The paper has benefited from the suggestions and insightful comments received from the participants of the $40$th International Symposium on Forecasting, Econometric Society Winter Meeting, 2020, and the econometrics seminar at the University of California Riverside. This research did not receive any specific grant from funding agencies in the public, commercial, or not-for-profit sectors. Declarations of interest: none.}}
\author{Varlam Kutateladze\thanks{\scriptsize{Correspondence to: Department of Economics, University of California, Riverside, CA 92521, USA. \newline E-mail: \texttt{varlam.kutateladze@email.ucr.edu}.}}}
\date{\today}

\begin{document}
\maketitle
\begin{abstract}\small{
	Factor modeling is a powerful statistical technique that permits to capture the common dynamics in a large panel of data with a few latent variables, or factors, thus alleviating the curse of dimensionality. Despite its popularity and widespread use for various applications ranging from genomics to finance, this methodology has predominantly remained linear. This study estimates factors nonlinearly through the kernel method, which allows flexible nonlinearities while still avoiding the curse of dimensionality. We focus on factor-augmented forecasting of a single time series in a high-dimensional setting, known as diffusion index forecasting in macroeconomics literature. Our main contribution is twofold. First, we show that the proposed estimator is consistent and it nests linear PCA estimator as well as some nonlinear estimators introduced in the literature as specific examples. Second, our empirical application to a classical macroeconomic dataset demonstrates that this approach can offer substantial advantages over mainstream methods.
	}

\bigskip
\noindent \small{{\it JEL Classification:}  C38, C53, C45}\\ 
\noindent \small{{\it Keywords:}  Macroeconomic forecasting; Latent factor model; Nonlinear time series; Principal component analysis; kernel PCA; Neural networks; Econometric models}
\end{abstract}
\vfill

\newpage \spacingset{1.45}
\section{Introduction}
\label{sec:intro}

Over the past century, factor models have become an integral part of multivariate analysis and high-dimensional statistics, and have had a substantial effect on a number of different fields, including psychology (\cite{Thompson1938}), biology (\cite{Hirzel2002}) and economics (\cite{Chamberlain}). In economics and finance, applications range from portfolio optimization (\cite{Fama1992}) and covariance estimation (\cite{Fan2013}) to  forecasting (\cite{StockWatson2002a}). The application that we consider is macroeconomic forecasting, where various forms of factor analysis have become state-of-the-art techniques for prediction.

The general idea behind factor analysis consists of determining a few latent variables, or factors, that drive the dependence of the entire outcomes. Factors are designed to capture the common dynamics in a large panel of data.  This feature is crucial in the context of increasing availability of macroeconomic time series coupled with the inability of standard econometric methods to handle many variables. While classical econometric tools break down in such data-rich or ``big data" environments, factor models help to compress a large amount of the available information into a few factors, turning the curse of dimensionality into a blessing. 

Factor analysis possesses several attractive properties that justify a large amount of literature in its support. First, it effectively handles large dimensions thereby enhancing forecast accuracy in such regimes. This was demonstrated in \cite{StockWatson2002a} and \cite{StockWatson2002b} who used so-called diffusion indexes, or factors, in forecasting models when dealing with a large number of predictors. More recently, \cite{KimSwanson2018} find that factor augmented models nearly always outperform a wide range of big data and machine learning models in terms of predictive power. Second, due to its conceptual simplicity, this methodology found its use beyond academic research. For example, the Federal Reserve Bank of Chicago constructs the Chicago Fed National Activity Index (CFNAI) simply as the first principal component of a large number of time series. 
Third, factor analysis aligns naturally with the dynamic equilibrium theories as well as the stylized fact of \cite{Sargent77} of a small number of variables explaining most of the fluctuations in macroeconomic time series. And finally, factor estimates can also be used to provide efficient instruments for augmenting vector autoregressions (VARs) (\cite{Bernanke05}) to assist in tracing structural shocks.

Factors, however, are not observable and need to be estimated. Two classical estimation strategies rely on either intertemporal or contemporaneous smoothing. The former casts the model into a state-space representation and estimates it by the maximum likelihood via the Kalman filter. The disadvantages of this approach are that it requires parametric assumptions and that it quickly becomes computationally infeasible as the number of predictor series grows \footnote{Interestingly, there has been some evidence to the contrary, see \cite{Doz2012}}. Contemporaneous smoothing is a more predominant and computationally simpler way based on principal component analysis (PCA) (\cite{Pearson}), nonparametric least-squares approach for estimating factors. 	

Forecasts are obtained via a two-step procedure. First, factors estimates are derived from the set of available time series by one of the two methods described above. Once the factors are estimated, run a linear autoregression of the variable of interest onto factor estimates and observed covariates (e.g. lagged values of the dependent variable). 

We analyze a factor model that is high-dimensional, static and approximate. High-dimensional framework (\cite{BaiNg2002}), as opposed to classical framework (\cite{Anderson1984}), allows both time and cross-section dimensions to grow. Static models do not explicitly model time-dependence of factors contrary to more general dynamic counterparts (\cite{Forni2000a}). Approximate factor structure (\cite{Chamberlain}) is more flexible compared with a strict version (\cite{Ross1976}) as it imposes milder assumptions on the idiosyncratic component.

Factor analysis is closely related to PCA, although the two are not the same (\cite{Jolliffe1986}). It is, however, well documented that the two are asymptotically equivalent under suitable conditions (see the pervasiveness assumption in \cite{Fan2013} for a recent treatment).
There are several results on consistency of PCA estimators of factors (\cite{Connor1986}, \cite{StockWatson2002a}, \cite{Bai2006} among others) for various forms of factor models. One of the most relevant of the results is established in \cite{BaiNg2002} who derive convergence rates of such estimators for an approximate static factor model of large dimensions.

Despite its widespread use, factor modeling, and diffusion index forecasting methodology in particular, is still fundamentally limited to linear framework. Over the past two decades, leading researchers noted multiple times (e.g. see \cite{StockWatson2002b}, \cite{Bai2008}, \cite{StockWatson2012}, \cite{Hansen2015}) that further forecast improvements ``will need to come from models with nonlinearities and/or time variation" and that ``nonlinear factor-augmented regression should be considered" for forecasting. There have been attempts to incorporate time dependence (see, for example, \cite{DelNegro2008}, \cite{Mikkelsen2015} and \cite{Stevanovic2016} among others) as well as some work documenting the superiority of nonlinear models in time series context (\cite{Terasvirta1994}, \cite{Giovannetti}, \cite{KimSwanson2014}). However, the literature on addressing nonlinearity within the factor modeling or diffusion index methodology framework, which arguably remains to be the state-of-the-art technique for macroeconomic prediction (\cite{Stevanovich2019}), is scarce. 

One of the first such attempts is \cite{Amemiya2001} who assume nonlinear errors-in-variables parametrization of a factor model, which is not designed for forecasting applications. The most prominent work with focus on a prediction exercise is \cite{Bai2008}. They either augment the set of predictor time series with their squares and apply standard PC to the augmented set, or use squares of principal components obtained from the original (non-augmented) set. Another closely related work is \cite{Exterkate} who substitute the linear second step with kernel ridge regression and discover that this leads to more accurate forecasts of the key economic indicators.

This study adds to the scarce literature on nonlinear factor models. Specifically, the factors are allowed to capture nontrivial functions of predictors. To circumvent the computational difficulties associated with such novelties, we use the kernel trick, or kernel method (\cite{Hofmann2008}), which is discussed in the next section within the diffusion index methodology context.

The rest of the paper is organized as follows. Subsections 2.1 and 2.2 of Section 2 review the diffusion index methodology and the kernel trick; subsections 2.3 and 2.4 derive the estimators and provide the theoretical guarantees. Section 3 outlines the forecasting models, describes the data and provides the empirical results. Section 4 concludes and discusses possible extensions.  All proofs are given in the Appendix.

\textbf{Notation.} For a vector $\mathrm{v}\in\mathbb{R}^{d}$, we write its $i$-th element as $v_i$. The corresponding $\ell_p$ norm is $\norm{\mathrm{v}}_p = \big(\sum_{i=1}^d |v_i|^p\big)^{1/p}$, which is a norm for $1\le p \le \infty$. An inner product between two vectors of the same dimension is $\inner{\mathrm{v_i},\mathrm{v_j}} = \mathrm{v_i}'\mathrm{v_j}$. For a matrix $A\in\mathbb{R}^{m \times d}$, we write its $(i,j)$-th entry as $\{A\}_{ij} = a_{ij}$ and denote its $i$-th row (transposed) and $j$-th column as column vectors $A_{i\cdot}$ and $A_{\cdot j}$ respectively. Its singular values are $\sigma_1(A) \ge \sigma_2(A) \ge \ldots \ge \sigma_q(A)$, where $q=\min(m,d)$. The spectral norm is $\norm{A}_2 = \underset{\mathrm{v}\ne0}{\max} \frac{\norm{A\mathrm{v}}_2}{\norm{\mathrm{v}}_2}  = \sigma_1(A)$. The $\ell_1$ norm is $\norm{A}_1 = \underset{1\le j\le d}{\max} \sum_{i=1}^m |u_{ij}|$ and $\ell_\infty$ norm is $\norm{A}_\infty = \underset{1\le i\le m}{\max} \sum_{j=1}^d |u_{ij}|$. The Frobenius norm is $\norm{A}_F = \sqrt{\langle A,A \rangle} = \sqrt{tr(A'A)} = \sqrt{\sum_{i=1}^q \sigma_{i}^2(A)}$. For a symmetric matrix $W\in\mathbb{R}^{d\times d}$ with eigenvalues $\lambda_1(W) \ge \lambda_2(W) \ge  \ldots \ge \lambda_d(W)$, define $eig_r(W)\in\mathbb{R}^{d\times r}$ to be a matrix stacking $r\le d$ normalized eigenvectors in the order corresponding to $\lambda_1(W),\ldots, \lambda_r(W)$. Finally, for a sequence of random variables $\{X_n\}_{n=1}^\infty$ and a sequence of real nonnegative numbers $\{a_n\}_{n=1}^\infty$, denote $X_n = O_\mathbb{P}(a_n)$ if $\forall \epsilon >0, \exists M,N>0$ such that  $\forall n>N, \; \mathbb{P}(\abs{X_n/a_n}\ge M) < \epsilon$; and 
denote $X_n = o_\mathbb{P}(a_n)$ if $\forall \epsilon >0, \; \underset{n\to\infty}{\lim} \mathbb{P}(\abs{X_n/a_n}\ge \epsilon) = 0$. Finally, let $\mathbf{1}_{1/T}$ be a $T\times T$ matrix of ones divided by $T$.


\section{Methodology}
\subsection{Diffusion Index Models}
Our goal is to accurately forecast a scalar variable $Y_t$, given a $T\times N$ data matrix $X$ with $t$th row $X_t'$, or $X_{t\cdot}'$. Both the number of observations $T$ and the number of series $N$ are typically large.

Consider the following baseline model, known as a Diffusion Index (DI) model:
\begin{equation}\label{eq:eq1}
\underset{1\times 1}{Y_{t+h}} = \underset{1\times r}{\beta_F'} \underset{r\times 1}{F_t} + \underset{1\times p}{\beta_W'} \underset{p\times 1}{W_t} + \underset{1\times 1}{\epsilon_{t+h}},
\end{equation}
\begin{equation}\label{eq:eq2}
\underset{N\times 1}{X_t} = \underset{N\times r}{\Lambda} \underset{r \times 1}{F_t} + \underset{N\times 1}{e_t}.
\end{equation}
Equation \ref{eq:eq1} is a linear forecasting model, where $Y_{t+h}$ is the value of the target variable $h$ periods in the future, $F_t$ is the vector of $r$ factors at time $t$, $W_t$ is a vector of $p$ observed covariates (e.g. an intercept and lags of $Y_{t+h}$), $\epsilon_{t+h}$ is a disturbance term. Equation \ref{eq:eq2} specifies the factor model, where $X_t$ is vector of $N$ candidate predictor series, $\Lambda$ is a loading matrix for $r$ common driving forces in $F_t$, $e_t$ is an idiosyncratic disturbance; and $t=1,\ldots,T$. The latter equation can be rewritten in matrix form
\begin{equation}\label{eq:eq3}
\underset{T\times N}{X} = \underset{T \times r}{F} \underset{r\times N}{\Lambda'} + \underset{T\times N}{e},
\end{equation}
where $X = [X_1,\ldots,X_T]'$ and $F = [F_1,\ldots,F_T]'$. Throughout the paper it is assumed that all series are weakly stationary, while variables in $X$ have been standardized, meaning that each variable is separately demeaned and set to have unit $\ell_2$-norm.

If the above set of equations is augmented with transition equations for $F_t$, we obtain a dynamic factor model which is estimated by the Kalman filter as discussed above. Let us instead focus on a nonparametric estimation approach as suggested in \cite{StockWatson2002a}. The goal at first stage is to solve 
\begin{equation}\label{eq:eq4}
	\begin{gathered}
	\underset{F,\Lambda}{\argmin} \; \norm{X - F\Lambda'}^2_F \\
	N^{-1}\Lambda'\Lambda = I_r, \quad F'F \text{ diagonal},
	\end{gathered}
\end{equation}

where the restrictions are in place for identifying the unique solution (up to a column sign change). It is well known that the estimator of factor loadings $\widehat{\Lambda}$ is given by the $r$ eigenvectors associated with largest eigenvalues of $X'X$, while $\widehat{F} = X\widehat{\Lambda}$. This estimator $\widehat{F}$ is equivalent to principal component (PC) scores derived from the matrix $X$. Once we have an estimate of $F$, the second stage involves least squares estimation of equation \ref{eq:eq1} with $F$ substituted with its estimate. 

It is clear that the standard PC estimator reduces the dimensionality of $X$ linearly: $F_t$ represents the projection of $X_t$ onto $r$ eigenvector directions exhibiting the most variation. However, if there is a nonlinearity in $X$, that is if the true lower dimensional representation is a nonlinear submanifold in the original space, such linear projections will be inaccurate. There are several ways to take into account a possible nonlinearity. For example, \cite{Bai2008} propose a squared principal components (SPC) procedure, which applies the standard PCA algorithm to the matrix $X$ augmented by its square, that is $[X,X^2]$. Although this procedure supposedly leads to additional forecasting gains, it is limited by the second-order features of the data. 

Other nonlinear dimension reduction techniques include Laplacian eigenmaps (\cite{Belkin2001}), Locally-Linear Embeddings (\cite{Roweis2000}), Isomaps (\cite{Tenenbaum2000}) and a number of others (\cite{Hastie1989}, \cite{Mark1991}, \cite{Hinton2008}). In this paper we use the approach that applies the kernel trick to the standard PCA, so\hyp{}called kernel PCA (kPCA) (\cite{Scholkopf1999}). This algorithm can be shown to contain a number of widely used dimensionality reduction methods, including the ones listed above (\cite{Hofmann2008}). While it permits modeling a set of nonlinearities rich enough for successful applications in nontrivial pattern recognition tasks such as face recognition (\cite{Kim2002}), the algorithm does not involve any iterative optimization.

\subsection{Kernel Method}
In this subsection we review the kernel trick methodology and illustrate its usefulness. 
The kernel method implicitly maps the original data nonlinearly into a high\hyp{}dimensional space, known as a feature space, $\varphi(\cdot): \mathcal{X} \rightarrow \mathcal{F}$. This space can in fact be infinite\hyp{}dimensional which would seemingly prohibit any calculations. However, the trick is precisely in avoiding such calculations.  The focus is instead on similarities between any two transformed data points $\varphi(\mathrm{x_i})$ and $\varphi(\mathrm{x_j})$ in the feature space\footnote{Formally, the feature space is thus a Hilbert space, that is a vector space with a dot product defined on it.} as measured by $\varphi(\mathrm{x_i})'\varphi(\mathrm{x_j})$, calculating which would at first sight require the knowledge of the functional form of $\varphi(\cdot)$. The solution is to use a kernel function $k(\cdot,\cdot): \mathcal{X} \times \mathcal{X} \rightarrow \mathbb{R}$, which would output the inner product in the feature space without ever requiring the explicit functional form of $\varphi(\cdot)$. Moreover, a valid kernel function guarantees the existence of a feature mapping $\varphi(\cdot)$ although its analytic form may be unknown. The only requirement for this is positive-definiteness of the kernel function (Mercer's condition, see \ref{app:Mercer}), specifically,  $\int \int f(\mathrm{x_i}) k(\mathrm{x_i}, \mathrm{x_j}) f(\mathrm{x_j})  d\mathrm{x_i}  d\mathrm{x_j} \ge 0$, for any square-integrable function $f(\cdot)$.

This kernel function forms a Gram matrix $K$, known as a Kernel matrix, elements of which are inner products between transformed training examples, that is $\{K\}_{ij} = k(\mathrm{x_i},\mathrm{x_j}) = \varphi(\mathrm{x_i})'\varphi(\mathrm{x_j})$. What makes the kernel trick useful is the fact that many models can be written exclusively in terms of dot products between data points. For example, the ridge regression coefficient estimator can be formulated as $X'(XX' + \lambda I_T)^{-1}\mathrm{y}$\footnote{This can be derived by solving the dual of the ridge least squares optimization problem, however a simpler approach would be to apply the matrix identity $(B'C^{-1}B + A^{-1})^{-1}B'C^{-1} = AB'(BAB'+C)^{-1}$ to the usual ridge estimator $(X'X + \lambda I_N)^{-1}X'\mathrm{y}$.}, and hence the prediction for a test example $\mathrm{x_*}$ is $\hat{y} = \mathrm{x_*}' X'(XX' + \lambda I_T)^{-1}\mathrm{y}$, where the dependence is exclusively on inner products between the covariates. This property allows us to apply the kernel method by substituting dot products between original variables with their nonlinear kernel evaluations, that is dot products between transformed variables. Hence, an alternative form is $ k_*'(K + \lambda I_T)^{-1}\mathrm{y}$, where $k_*$ with $\{k_*\}_i = \{\varphi(\mathrm{x}_*)'\varphi(X)'\}_{i}$ is a vector of similarities between the test example and training examples in the feature space. In terms of the time complexity, the algorithm needs to invert a $T\times T$ matrix instead of inverting an $N\times N$ matrix. 

The key advantage of the kernel method is that it effectively permits using a linear model in a high-dimensional nonlinear space, which amounts to applying a nonlinear technique in the original space.
As a toy example, consider a classification problem shown in Figure \ref{fig:reg1}, where the true function separating the two classes is a circle of radius .5 around the origin. The left panel depicts observations from two classes which are not linearly separable in the original two-dimensional space. Applying a simple polynomial kernel of degree $2$, $k(\mathrm{x_i}, \mathrm{x_j)} = (\mathrm{x_i}'\mathrm{x_j})^2$, implicitly corresponds to working in the feature space depicted on the right panel, since for $\varphi(\mathrm{x_i}) = (x_{i1}^2, \; \sqrt{2}x_{i1}x_{i2}, \; x_{i2}^2)'$ we have $\varphi(\mathrm{x_i})'\varphi(\mathrm{x_j}) = (\mathrm{x_i}'\mathrm{x_j})^2$. In this toy example, a linear classifier could perfectly separate the observations in the right panel of Figure \ref{fig:reg1}. 

\begin{figure}[!htb]\vspace{-20pt}
	\centering
	\includegraphics[width=1\linewidth]{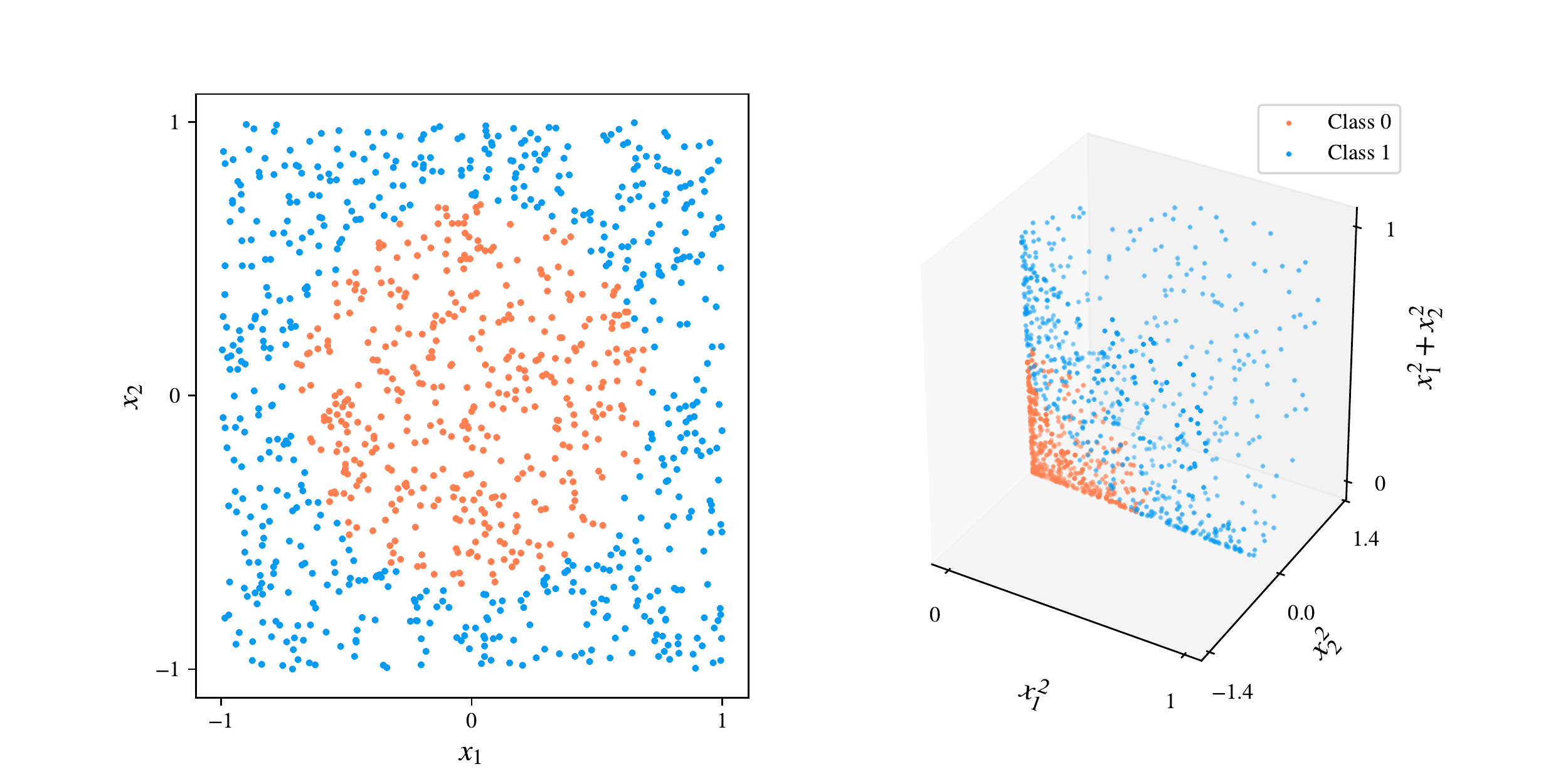}
	\caption{Kernel trick illustration for a toy classification example.}
	\medskip
	\small
	\textit{Left}: observations from two classes in the original space, not linearly separable. \textit{Right}: observations in the feature space, linearly separable. See the details in the text.
	\label{fig:reg1}
\end{figure}

While all valid kernel functions are guaranteed to have the corresponding feature space, in many cases it is implicit and infinite-dimensional, as, for instance, for the radial basis function (RBF) kernel $k(\mathrm{x_i}, \mathrm{x_j}) = e^{-\gamma\norm{\mathrm{x_i} - \mathrm{x_j}}_2^2}$ (see \ref{app:rbf}). Again, luckily, the knowledge of the feature mapping is not required.

\subsection{Nonlinear Modeling and kPCA}

Suppose there is a nonlinear function $\varphi(\cdot): \mathbb{R}^N \rightarrow \mathbb{R}^M$, where $M \gg N$ is very large (often infinitely large), mapping each observation to a high-dimensional feature space, $X_t \rightarrow \varphi(X_t)$. For now we consider $M$ to be finite for simplicity of exposition, so the original $T\times N$ data matrix $X$ can be represented as a $T\times M$ matrix $\Phi = [\varphi(X_1),\ldots,\varphi(X_T)]'$ in the transformed space, which may not be observable. Infinite-dimensional case induces several complications and is considered later. 

Both the original $X$ and its transformation $\Phi$ are assumed to be demeaned. The latter requirement is simple to incorporate in the kernel matrix despite the mapping being unobserved. Specifically, supposing the original (non-demeaned) transformation is $\widetilde{\Phi}$, the kernel associated with demeaned features is 
\begin{equation}\label{eq:eq5}
K = (I_T - \mathbf{1}_{1/T})\widetilde{\Phi} \widetilde{\Phi}'(I_T - \mathbf{1}_{1/T})' = \widetilde{K} - \mathbf{1}_{1/T}\widetilde{K} - \widetilde{K}\mathbf{1}_{1/T} + \mathbf{1}_{1/T}\widetilde{K}\mathbf{1}_{1/T},
\end{equation}
where $\widetilde{K} = \widetilde{\Phi} \widetilde{\Phi}'$ is based on the original $\widetilde{\Phi}$.

Our modeling of nonlinearity is through the feature mapping $\varphi(\cdot)$. This function replaces the original variables of interest in equation \eqref{eq:eq2} with their transformations,
\begin{equation}\label{eq:eq6}
\underset{M\times 1}{\varphi(X_t)} = \underset{M\times r}{\Lambda_\varphi} \underset{r \times 1}{F_{\varphi,t}} + \underset{M\times 1}{e_{\varphi,t}},
\end{equation}
where the subscript $\varphi$ indicates the association with the transformation. By stacking these into a  $T\times M$ matrix $\Phi$ we can rewrite the minimization problem \eqref{eq:eq4} as
\begin{equation}\label{eq:eq7}
	\begin{gathered}
	\underset{F_\varphi,\Lambda_\varphi}{\argmin} \; \norm{\Phi - F_\varphi\Lambda_\varphi'}^2_F \\
	N^{-1}\Lambda_\varphi'\Lambda_\varphi = I_r, \quad F_\varphi'F_\varphi  \text{ diagonal}.
	\end{gathered}
\end{equation}
Note that solving this directly through the eigendecomposition of $\Phi'\Phi$ is generally infeasible, since $\Phi'\Phi$ is $M\times M$ dimensional. Even if the dimension $M$ was not prohibitive, the map $\varphi(\cdot)$ is unknown for interesting problems rendering any computation dependent on $\Phi$ or $\Phi'\Phi$ alone impossible. Fortunately, it is possible to reformulate this problem in terms of the $T\times T$ Gram matrix $K = \Phi\Phi'$.

While we are assuming $M$ to be prohibitively large but finite, the following decomposition generalizes to infinite dimensions. 
Starting from the ``infeasible" eigendecomposition of the unknown covariance matrix of $\Phi$
\begin{equation}\label{eq:eq8}
\frac{\Phi'\Phi}{T}V_\varphi^{[i]} = \lambda_i^c V_\varphi^{[i]}, \qquad i=1,\ldots,M,
\end{equation}
where the eigenvalues $\lambda_i^c = \lambda_i(\frac{\Phi'\Phi}{T})$ satisfy $\lambda_1^c \ge \lambda_2^c \ge \ldots \ge \lambda_T^c$ and  $\lambda_j^c=0$ for $j > T$ (assuming $M\ge T$) and $V^{[i]}$ is an $M$-dimensional eigenvector associated with the $i$th eigenvalue $\lambda_i^c$.

The key is to observe that each $V^{[i]}_\varphi$ can be expressed as a linear combination of features
\begin{equation}\label{eq:eq9}
V^{[i]}_\varphi = \frac{\Phi'\Phi}{\lambda_i^c T} V^{[i]}_\varphi \equiv \Phi'A^{[i]}, \qquad i=1,\ldots,M,
\end{equation}
where $A^{[i]} = \frac{\Phi V^{[i]}_\varphi}{\lambda_i^c T} = \left[ \alpha_1^{[i]}, \; \ldots \;,  \alpha_T^{[i]} \right]'$ is a vector of weights which is determined next. Plugging this back into \eqref{eq:eq8} yields
\begin{equation} \label{eq:eq10}
\lambda_i^c \Phi'A^{[i]} = \frac{\Phi'\Phi}{T} \Phi'A^{[i]}, \qquad i = 1,\ldots,M.
\end{equation}
Finally, premultiplying equation \eqref{eq:eq10} to the left by $\Phi$ and removing $K=\Phi\Phi'$ from both sides we obtain
\begin{equation}
\frac{K}{T} A^{[i]} = \lambda_i^c A^{[i]}, \qquad i = 1,\ldots,M,
\end{equation}
hence the $i$th vector of weights $A^{[i]}$ corresponds to an eigenvector of a finite-dimensional Gram matrix $K$ associated with the $i$th largest eigenvalue $\lambda_i(\frac{K}{T}) = \lambda_i^c$  with  $\lambda_j(\frac{K}{T})=0$ for $j > T$. 

Notice that while solving the eigenvalue problem of $\frac{K}{T}$ allows to compute $A^{[i]}$, we are still unable to obtain the vector $V^{[i]} = \Phi'A^{[i]}$ since $\Phi$ may not be known. However, the main object of interest is recoverable: to calculate principal component projections, we  project the data onto (unknown) eigenspace,
\begin{equation}\label{eq:kernelfactor}
\underset{T\times 1}{\widehat{F}^{[i]}_\varphi} = \Phi V^{[i]} = \Phi \Phi'A^{[i]}  = KA^{[i]}, \qquad i = 1,\ldots,M.
\end{equation}
Stacking estimated factors corresponding to the first $r$ eigenvalues, define a $T\times r$ matrix $\widehat{F}_\varphi = \left[\widehat{F}^{[1]}_\varphi, \; \ldots\;, \widehat{F}^{[r]}_\varphi \right]$, where the subindex $r$ is dropped to simplify the notation. We refer to the factors constructed this way as \textit{kernel factors}. 

A similar alternative solution that only involves the Gram matrix has been known in econometrics since at least \cite{Connor1993}. In particular, for a given $r$ the optimization problem in \eqref{eq:eq7} with the identification constraints $T^{-1}F_\varphi'F_\varphi = I_r$ and diagonal $\Lambda_\varphi'\Lambda_\varphi$, has the solution $\widetilde{F}_\varphi = \sqrt{T} eig_r(\Phi \Phi') = \sqrt{T} A_r$, where $A_r = \left[\widehat{A}^{[1]}, \; \ldots\;, \widehat{A}^{[r]} \right]$. Hence, the kPCA estimator is equivalent to the latter premultiplied by  $\frac{\Phi \Phi'}{\sqrt{T}} = \frac{K}{\sqrt{T}}$. Note, however, that both estimators yield the same predictions when passed to the main forecasting equation \eqref{eq:eq1} as they have identical column spaces. This idea is summarized in the following proposition. 

\begin{proposition}
\label{prop1}
Estimators $\widehat{F}_\varphi = \Phi\Phi'eig_r(\Phi \Phi')$ and $\widetilde{F}_\varphi = \sqrt{T} eig_r(\Phi \Phi')$ produce the same projection matrix.
\end{proposition}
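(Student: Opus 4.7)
The plan is to compute the orthogonal projector onto the column space of each estimator directly and show that both reduce to $A_r A_r'$, where $A_r := eig_r(\Phi\Phi')$ is the $T\times r$ matrix whose columns are the top-$r$ orthonormal eigenvectors of $K = \Phi\Phi'$. Because $A_r$ has orthonormal columns, $A_r'A_r = I_r$, and the projector onto its column space is simply $A_r A_r'$. The strategy is therefore: (i) express each estimator in a form that exposes $A_r$ as a basis (up to a right-multiplication by an invertible $r\times r$ matrix), then (ii) invoke the elementary fact that right-multiplying a full-column-rank matrix by any invertible matrix leaves its column space, and hence its orthogonal projector, unchanged.

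First I would handle $\widetilde{F}_\varphi = \sqrt{T}\,A_r$. Since $A_r'A_r = I_r$, we get $\widetilde{F}_\varphi'\widetilde{F}_\varphi = T I_r$, which is invertible, so
\begin{equation*}
P_{\widetilde{F}_\varphi} = \widetilde{F}_\varphi\bigl(\widetilde{F}_\varphi'\widetilde{F}_\varphi\bigr)^{-1}\widetilde{F}_\varphi' = \sqrt{T}A_r\cdot (TI_r)^{-1}\cdot \sqrt{T}A_r' = A_r A_r'.
\end{equation*}
Next I would deal with $\widehat{F}_\varphi = K\,A_r$. Because $A_r$ stacks eigenvectors of $K$ corresponding to the top $r$ eigenvalues, $KA_r = A_r D_r$ where $D_r = \mathrm{diag}(\lambda_1(K),\ldots,\lambda_r(K))$. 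Under the standard factor-model rank assumption the top $r$ eigenvalues of $K$ are strictly positive, so $D_r$ is invertible. Then $\widehat{F}_\varphi'\widehat{F}_\varphi = D_r A_r' A_r D_r = D_r^2$, and
\begin{equation*}
P_{\widehat{F}_\varphi} = A_r D_r \bigl(D_r^2\bigr)^{-1} D_r A_r' = A_r A_r'.
\end{equation*}
Comparing the two displays yields $P_{\widehat{F}_\varphi} = P_{\widetilde{F}_\varphi}$, which is the claim.

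The only real subtlety is the positivity of $\lambda_r(K)$: if the $r$-th eigenvalue vanished then $\widehat{F}_\varphi$ would have a strictly smaller column space than $\widetilde{F}_\varphi$ and the conclusion would fail. This is, however, exactly the standard pervasiveness/identification premise under which one extracts $r$ factors in the first place, and the same assumption is implicit in writing $eig_r(\Phi\Phi')$ as a well-defined $T\times r$ object in the statement. I would state this as a one-line remark rather than as a separate hypothesis. Apart from this, the argument is a direct computation and I do not anticipate any genuine obstacle.
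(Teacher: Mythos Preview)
Your proof is correct. The paper does not actually give a separate proof of this proposition; the text immediately preceding it simply observes that $\widehat{F}_\varphi = \frac{K}{\sqrt{T}}\widetilde{F}_\varphi$ and that the two therefore have identical column spaces, leaving the reader to fill in the (trivial) details. Your explicit computation---using $KA_r = A_r D_r$ with $D_r$ invertible and showing both projectors equal $A_r A_r'$---is exactly that filling-in, and your remark on the necessity of $\lambda_r(K)>0$ is a welcome clarification that the paper leaves implicit.
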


Importantly, we also establish that certain commonly used kernels allow the kernel factor estimator to incorporate the usual PC estimator. The following proposition demonstrates that RBF and sigmoid kernels allow to nest (a constant multiple of) the PC estimator for limiting values of the hyperparameter.

\begin{proposition}
	\label{prop2}
	For a column-centered matrix $X \in \mathbb{R}^{T\times N}$, let $\widehat{F}_\varphi = K eig_r(K)$ be the kernel factor estimator and $\widehat{F} = X eig_r(X'X)$ be the usual linear PCA factor estimator. Then \vspace{-5pt}
	\[\exists s= \pm 1, \text{ such that } \underset{\gamma \to 0}{\lim} \; c\gamma^{-1} \widehat{F}_\varphi L^{-1/2} = s\widehat{F}, \quad \forall r=1,\ldots,\min{\{T,N\}}, \vspace{-5pt}\] 
	 where $K = \widetilde{K} - \mathbf{1}_{1/T}\widetilde{K} - \widetilde{K}\mathbf{1}_{1/T} + \mathbf{1}_{1/T}\widetilde{K}\mathbf{1}_{1/T}$, $L$ is a diagonal matrix of $r$ largest eigenvalues of $XX'$ sorted in nonincreasing order. Furthermore, \\
	 \textbf{\textrm{(a)}} for RBF kernel $\widetilde{k}_{ij} = e^{-\gamma\norm{X_{i\cdot} - X_{j\cdot}}_2^2}$ we have $c=2^{-1}$,\\
	 \textbf{\textrm{(b)}} for sigmoid kernel $\widetilde{k}_{ij} = \tanh(c_0+\gamma X_{i\cdot}'X_{j\cdot})$ we have $c=(1-\tanh^2(c_0))^{-1}$, where $c_0$ is an arbitrary (hyperparameter) constant.
\end{proposition}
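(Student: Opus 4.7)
My plan is to Taylor expand each kernel around $\gamma=0$, show that after the double-centering in \eqref{eq:eq5} only the first-order term in $\gamma$ survives, and that this first-order term is precisely $c^{-1}\gamma XX'$, where $c$ is the constant stated in the proposition. Then, since scaling does not change eigenvectors, the factor estimator $\widehat{F}_\varphi = K\,eig_r(K)$ reduces in the limit to the familiar $XX'$ eigenstructure up to a sign, and relating this to $\widehat{F} = X\,eig_r(X'X)$ via the SVD of $X$ finishes the argument.

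Concretely, for part (a) I would write $\widetilde{k}_{ij} = 1 - \gamma\|X_{i\cdot}-X_{j\cdot}\|_2^2 + O(\gamma^2)$ and set $d_i = \|X_{i\cdot}\|_2^2$, so that $\widetilde{K} = \mathbf{1}\mathbf{1}' - \gamma\bigl(d\mathbf{1}' + \mathbf{1}d' - 2XX'\bigr) + O(\gamma^2)$. Applying the centering operator $(I_T - \mathbf{1}_{1/T})$ on both sides annihilates the rank-one pieces $\mathbf{1}\mathbf{1}'$, $d\mathbf{1}'$ and $\mathbf{1}d'$ (since $(I_T-\mathbf{1}_{1/T})\mathbf{1}=0$), while the column-centered assumption $\mathbf{1}'X=0$ yields $(I_T-\mathbf{1}_{1/T})X=X$, hence $(I_T-\mathbf{1}_{1/T})XX'(I_T-\mathbf{1}_{1/T}) = XX'$. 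Therefore $K = 2\gamma XX' + O(\gamma^2)$, i.e.\ $(2\gamma)^{-1}K \to XX'$. For part (b) the Taylor expansion $\tanh(c_0+\gamma X_{i\cdot}'X_{j\cdot}) = \tanh(c_0) + (1-\tanh^2(c_0))\gamma X_{i\cdot}'X_{j\cdot} + O(\gamma^2)$ gives $\widetilde{K} = \tanh(c_0)\mathbf{1}\mathbf{1}' + (1-\tanh^2(c_0))\gamma XX' + O(\gamma^2)$; centering again kills the constant piece and leaves $K = (1-\tanh^2(c_0))\gamma XX' + O(\gamma^2)$, so with $c = (1-\tanh^2(c_0))^{-1}$ we again obtain $c\gamma^{-1}K \to XX'$.

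With $c\gamma^{-1}K \to XX'$ established in both cases, the continuity of the spectral decomposition on a (generically) simple spectrum yields $eig_r(K) = eig_r(c\gamma^{-1}K) \to eig_r(XX')\cdot S$ for some sign matrix $S$ with entries in $\{\pm 1\}$, where the signs absorb the usual eigenvector orientation ambiguity. Writing $U_r = eig_r(XX')$ with eigenvalue matrix $L$, we get
\begin{equation*}
c\gamma^{-1}\widehat{F}_\varphi L^{-1/2} = \bigl(c\gamma^{-1}K\bigr)\,eig_r(K)\,L^{-1/2} \;\longrightarrow\; XX' U_r S L^{-1/2} = U_r L^{1/2} S.
\end{equation*}
Using the SVD $X = U\Sigma V'$ one checks $\widehat{F} = X\,eig_r(X'X) = U_r \Sigma_r = U_r L^{1/2}$, so the limit equals $\widehat{F}\,S$, i.e.\ $s\widehat{F}$ with $s=\pm 1$ as stated.

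The only delicate point is the sign/eigenspace argument at the end: when an eigenvalue of $XX'$ is simple, the corresponding eigenvector of $K$ converges up to a sign by standard perturbation theory (Davis--Kahan), which is all we need; when eigenvalues coalesce, $\widehat{F}$ itself is defined only up to an orthogonal rotation within the degenerate subspace, and the same ambiguity applies to $\widehat{F}_\varphi$, so the stated equality still holds after identifying those rotational degrees of freedom. The Taylor-expansion step is mechanical; the substantive verification is that double-centering exactly removes every term that would otherwise spoil the identity $K \propto \gamma XX' + O(\gamma^2)$, and this is guaranteed by the column-centering of $X$.
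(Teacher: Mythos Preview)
Your proposal is correct and follows essentially the same route as the paper: establish $c\gamma^{-1}K \to XX'$ (you via Taylor expansion in matrix form, the paper via L'H\^opital entrywise, both then invoking the column-centering $\mathbf{1}'X=0$), and then pass to the factor estimator through the SVD identity $XX'\,eig_r(XX')\,L^{-1/2} = X\,eig_r(X'X)$. Your treatment is in fact slightly more careful than the paper's on two points---writing the centering as $(I_T-\mathbf{1}_{1/T})$ acting on rank-one pieces, and explicitly invoking Davis--Kahan for eigenvector continuity and noting that the sign ambiguity is columnwise rather than a single scalar $s$---but the underlying argument is the same.
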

\begin{proof}
	See appendix \ref{app:prop2}.
\end{proof}

Adjustment by $\gamma^{-1}$ is necessary as the entries of $\widehat{F}_\varphi = K eig_r(K)$ approach $0$ as $\gamma \to 0$. This is due to the nature of $K = \widetilde{K} - \mathbf{1}_{1/T}\widetilde{K} - \widetilde{K}\mathbf{1}_{1/T} + \mathbf{1}_{1/T}\widetilde{K}\mathbf{1}_{1/T}$, where $k_{ij}$ converge to $0$ for both kernels for all $i,j=1,\ldots,T$. Luckily $\gamma^{-1}K$ has a nice non-degenerate limiting value (shown in the proof) that ensures the limit in proposition \ref{app:prop2} holds. 

Proposition \ref{prop2} states that the (properly scaled) kernel factor matrix converges pointwise to its PCA analog (up to a sign flip) as the value of the hyperparameter $\gamma$ nears zero. That is, in the limit the two factor estimators are constant multiples of one another and hence produce the same forecasts. This ability to mimic the linear estimator is important for certain applications. For example, in macroeconomic forecasting it is notoriously difficult to beat linear models in a short horizon prediction exercise. 

Figure \ref{fig:reg2} illustrates the implicit procedure for obtaining $r$ kernel factors. The selected kernel function induces nonlinearity $\varphi(\cdot)$ on each element of the input layer, $N$-dimensional observations $X_{1\cdot}, \ldots, X_{T\cdot}$. Next, pairwise similarities between high\hyp{}dimensional vectors $\varphi(X_{1\cdot}), \ldots, \varphi(X_{T\cdot})$ are computed, with $k(X_{i\cdot},\cdot) = \Big[ \varphi(X_{i\cdot})'\varphi(X_{1\cdot}),\allowbreak \; \ldots \;, \allowbreak \varphi(X_{i\cdot})'\varphi(X_{T\cdot}) \Big]'$. Finally, each factor is obtained as a linear combination of these inner products with the weights given as a solution to the eigenvalue problem discussed above. Of course, the kernel PCA algorithm does not explicitly nonlinearize the inputs as the kernel trick allows us to immediately calculate similarities and avoid the expensive high-dimensional computation. However, as mentioned earlier, the existence of such implicit nonlinearities is guaranteed by Mercer's theorem (\ref{app:Mercer}). As opposed to standard feedforward neural networks, there is no iterative training involved and no peril of being trapped in local optima. On the other hand, it has been noted that certain kernels, including RBF and sigmoid, allow extracting features of the same type as the ones extracted by neural networks (\cite{Scholkopf1999}). The two necessary steps involve evaluation of similarities in the kernel matrix and solving its eigenvalue problem; the complexity is thus dependent only on the sample size. 

\begin{figure}[!htb]
	\centering
	\begin{neuralnetwork}[height=4, layertitleheight=1em, layerspacing=35mm, nodespacing=20mm, nodesize=33pt]
		\newcommand{\nodetextclear}[2]{}
		\newcommand{\nodetextx}[2]{$X_#2$}
		\newcommand{\nodetextphi}[2]{$\varphi(X_{#2})$}
		\newcommand{\nodetextk}[2]{$\Scale[.85]{k(X_{#2},\cdot)}$}
		\newcommand{\nodetextf}[2]{$\widehat{F}_\varphi^{[#2]}$}
		\inputlayer[count=3, bias=false, title=Input, text=\nodetextx]
		\hiddenlayer[count=3, bias=false, title=Transformation, text=\nodetextphi] 
		\link[from layer=0, to layer=1, from node=1, to node=1]
		\link[from layer=0, to layer=1, from node=2, to node=2]
		\link[from layer=0, to layer=1, from node=3, to node=3]
		\hiddenlayer[count=3, bias=false, title = Hidden, text=\nodetextk] \linklayers 
		\outputlayer[count=2, title=Output, text=\nodetextf] \linklayers
	\end{neuralnetwork}
	\caption{Neural network interpretation of kernel PCA, $T=3$, $r=2$.}
	\medskip
	\small
	Each observation is nonlinearly transformed and the inner products are computed. The output units are kernel factors with $\widehat{F}_\varphi^{[j]} = \sum_{i=1}^T \alpha_i^{[j]}k(X_i,\cdot)$ which linearly combine these dot products, with the weight estimate calculated as the eigenvector of the kernel matrix $K$.
	\label{fig:reg2}
\end{figure}

\subsection{Theory}
Depending on the choice of the kernel, the induced feature space could be either finite or infinite. A polynomial kernel considered earlier generates a finite-dimensional feature space, consisting of a set of polynomial functions over the inputs. In this simple case, the eigenspace associated with the kernel factor estimator in equation \eqref{eq:kernelfactor} can generally be consistently estimated within the framework of \cite{Bai2003}. Particularly, proposition \ref{prop1} and the following theorem in \cite{Bai2003} immediately imply $\sqrt{M}$-consistency of $\widetilde{F}_\varphi$.

For the model associated with equation \eqref{eq:eq6}:

\textbf{Assumption A}: There exists a constant $c_1 < \infty$ independent of $M$ and $T$, such that
\begin{itemize}\label{ass:Ass1}\setlength\itemsep{-.5em}
	\item[] (a) $\E\norm{F_{\varphi,t}}_F^4 \le c_1$ and $T^{-1} F_{\varphi}'F_{\varphi} \overset{p}{\to} \Sigma_F > 0$, where $\Sigma_F$ is a non-random positive definite matrix;
	\item[] (b) $\E\norm{\Lambda_{\varphi,i\cdot}}_F \le c_1$ and $N^{-1} \Lambda_\varphi'\Lambda_\varphi \overset{p}{\to} \Sigma_\Lambda > 0$, where $\Sigma_\Lambda$ is a non-random positive definite matrix.
	\item[] (c1) $\E(e_{\varphi,it}) = 0$, $\E\abs{e_{\varphi,it}}^8 \le c_1$;
	\item[] (c2) $\E(e_{\varphi,s}'e_{\varphi,t}/M) = \gamma_M(s,t), \; \abs{\gamma_M(s,s)}\le c_1 \; \forall s$, $T^{-1}\sum_{s=1}^T\sum_{t=1}^T\abs{\gamma_M(s,t)} \le c_1$, $\sum_{s=1}^T \gamma_M(s,t)^2 \le M \; \forall t,T$;
	\item[] (c3) $\E(e_{\varphi,it}e_{\varphi,jt}) = \tau_{ij,t}$ with $\abs{\tau_{ij,t}} \le \abs{\tau_{ij}}$ for some $\tau_{ij}$ and $\forall t$; and $M^{-1}\sum_{i=1}^M \sum_{j=1}^M \allowbreak\abs{\tau_{ij}}\le c_1$;
	\item[] (c4) $\E(e_{\varphi,it}e_{\varphi,js}) = \tau_{ij,ts}$, $(MT)^{-1} \sum_{i=1}^M \sum_{j=1}^M \sum_{t=1}^T \sum_{s=1}^T \abs{\tau_{ij, ts}} \le c_1$;
	\item[] (c5) $\E \abs{M^{-1/2} \sum_{i=1}^M \left(e_{\varphi,is} e_{\varphi,it} - \E(e_{\varphi,is} e_{\varphi,it})\right)}^4 \le c_1 \; \forall t,s$;
	\item[] (d) $\E\left(\frac{1}{M} \sum_{i=1}^M \norm{\frac{1}{\sqrt{T}} \sum_{t=1}^T F_{\varphi, t}e_{\varphi,it}}_F^2 \right) \le c_1 $;
	\item[] (e) $T^{-1} F_{\varphi}'F_{\varphi} = I_r$ and $\Lambda_\varphi'\Lambda_\varphi$ is diagonal with distinct entries. 
\end{itemize}
Part (a) is standard in factor model literature. Part (b) ensures pervasiveness of factors in the sense that each factor has non-negligible effect on the variability of covariates. This is crucial for asymptotic identification of the common and idiosyncratic components. Parts (c$\cdot$) partially permit time-series and cross-section dependence in the idiosyncratic component, as well as heteroskedasticity in both dimensions. Possible correlation of $e_{\varphi,it}$ across $i$ sets up the model to have an approximate factor structure. Part (d) allows weak dependence between factors and idiosyncratic errors and (e) is for identification. 

The following theorem establishes consistency of kernel factors, for kernels inducing finite nonlinearities, in a large-dimensional framework.

\begin{theorem}[Theorem 1 \cite{BaiNg2002}, adapted] 
	\label{thm:thm1}
	Suppose the kernel function induces finite dimensional nonlinearity, i.e. for $ N<\infty,\; \varphi(\cdot): \mathbb{R}^N \to \mathbb{R}^M$, where $M:=M(N)$ is such that $ N\le M(N)<\infty$, and \hyperref[ass:Ass1]{Assumption A} holds. Then for any fixed $r\ge 1$, as $M,T \to \infty$
	\begin{equation*}
	\delta_{NT}^2\norm{\widetilde{F}_{\varphi,t} - H'F_{\varphi,t}}_F^2 = O_p(1), \quad \forall t=1,\ldots,T,
	\end{equation*}

	where $\delta_{NT} = \min\{\sqrt{M},\sqrt{T}\}$, $\underset{r\times r}{H} = \frac{\Lambda^0_\varphi\vspace{0pt}'\Lambda^0_\varphi}{M}\frac{F^0_\varphi\vspace{0pt}'\widetilde{F}_\varphi}{T}V_{MT}^{-1}$, $\underset{r \times r}{V_{MT}}$ is a diagonal matrix of $r$ largest eigenvalues of $\frac{\Phi \Phi'}{MT}$, $\widetilde{F}_{\varphi,t}$ and $F_{\varphi,t}$ are $t$-th rows in $\widetilde{F}_\varphi = \sqrt{T} eig_r(\Phi \Phi')$ and $F_\varphi$, respectively.
\end{theorem}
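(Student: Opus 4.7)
The plan is to observe that the statement is, despite its kernel dressing, a direct corollary of Theorem~1 of \cite{BaiNg2002} applied in the feature space. Under the finite-dimensional hypothesis $M<\infty$, the matrix $\Phi=[\varphi(X_1),\ldots,\varphi(X_T)]'$ is a bona fide $T\times M$ data matrix, and equation~\eqref{eq:eq6} (stacked) has exactly the form of the large-dimensional approximate factor model studied by Bai--Ng, with $M$ replacing $N$. Moreover, from Proposition~\ref{prop1} and the identification $T^{-1}F_\varphi'F_\varphi=I_r$, the estimator $\widetilde{F}_\varphi=\sqrt{T}\,eig_r(\Phi\Phi')$ is precisely the principal-components estimator derived from $\Phi$ in the normalization adopted by Bai--Ng (columns scaled to have norm $\sqrt{T}$), and it coincides numerically with $\sqrt{T}\,eig_r(K)$ since $K=\Phi\Phi'$. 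Hence although we never compute $\Phi$ explicitly, $\widetilde{F}_\varphi$ equals the Bai--Ng estimator for the feature-space factor model.

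First, I would line up Assumption~A with Bai--Ng's Assumptions~A--D, item by item: part (a) is their moment condition on factors; part (b) is the pervasiveness of loadings $\Lambda_\varphi$ in the feature space; parts (c1)--(c5) furnish the weak cross-sectional and serial dependence of the transformed idiosyncratic errors $e_{\varphi,it}$; part (d) is the weak dependence between factors and errors; and part (e) is the standard identification device. No probabilistic assumption here involves the original coordinates, and each condition is stated in terms of the $M$-dimensional transformed process, so the mapping is immediate once $N$ in Bai--Ng is replaced throughout by $M$.

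Given this matching, the conclusion follows by invoking their Theorem~1 verbatim: for any fixed $r\ge 1$, as $M,T\to\infty$,
\[
\delta_{MT}^2\,\norm{\widetilde{F}_{\varphi,t}-H'F_{\varphi,t}}_F^2=O_p(1),\qquad t=1,\ldots,T,
\]
with $\delta_{MT}=\min\{\sqrt{M},\sqrt{T}\}$ and the rotation $H$ exactly as stated (their expression, with $\Lambda^0$, $F^0$ and the diagonal matrix of top eigenvalues of $\Phi\Phi'/(MT)$ specialized to the feature space). The claim in the theorem then follows with $\delta_{NT}$ understood as $\delta_{MT}$.

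The only genuine obstacle is conceptual rather than analytic: one must be comfortable treating the unobserved $\Phi$ as the operative data object. Computationally we only ever access $K$, but the estimator $\widetilde{F}_\varphi=\sqrt{T}\,eig_r(K)$ is identically the PCA estimator of the latent $\Phi$, and all Bai--Ng identities---the decomposition $\widetilde{F}_\varphi-F_\varphi H=\widetilde{F}_\varphi (V_{MT})^{-1}(\cdot)$, the bounds on the cross and idiosyncratic terms, and the final step using $\delta_{MT}^{-2}$---go through unchanged because they are algebraic consequences of the model and the eigenvalue equation, not of the observability of $\Phi$. Thus no new probabilistic argument is required; the novelty is absorbed entirely into the verification that the transformed model is regular in the Bai--Ng sense, which is what Assumption~A asserts.
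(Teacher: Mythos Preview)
Your proposal is correct and follows the same approach as the paper: the paper's proof simply notes that the linear case $\varphi(X_t)=X_t$ is Theorem~1 of \cite{BaiNg2002}, and that for a general finite-dimensional $\varphi(\cdot)$ the result follows by applying the original theorem to $\varphi(X_t)$ in place of $X_t$. Your write-up is more explicit about the matching of assumptions and the identification of $\widetilde{F}_\varphi=\sqrt{T}\,eig_r(K)$ with the Bai--Ng estimator, but the underlying argument is identical.
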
 
\begin{proof}
	See appendix \ref{app:thm1}.
\end{proof}

Some comments are in order. First, the theorem states that the squared differences between the proposed factor estimator and (a rotation of) the true factor vanish as $M,T \to \infty$. While true factors themselves are not identifiable unless additional assumptions are imposed (see \cite{BaiNg13}), identification of the latent space spanned by factors is just as good as exact identification for forecasting purposes. 
Second, since for any given number of original variables $N$ the dimension of the transformed space $M$ is fixed and finite, the growth in $M$ is only possible through $N$ and thus the limit on $M$ implies one on $N$.
Third, this result does not imply uniform convergence in $t$.
Lastly, the results suggest the possibility of $\sqrt{T}$-consistent estimation of the forecasting equation with respect to its conditional mean.

Unfortunately, this result does not generalize to the most interesting kernels (e.g. RBF) inducing infinite-dimensional Hilbert spaces, rendering the traditional approach unsuitable for establishing theoretical properties. Hence we turn to a functional analytic framework which allows rigorous treatment of infinite-dimensional spaces and which has been the classical framework for analyzing statistical properties of functional PCA, and kernel PCA in particular, in machine learning literature. 

As will be shown later, it turns out that we can still show that the estimator concentrates around its population counterpart. We briefly present the necessary terms for understanding this result, without aiming to be exhaustive. A sufficiently detailed introduction to the analysis in Hilbert spaces can be found in \cite{Blanchard06}, while a classical reference for operator perturbation theory is \cite{Kato52}. 

One of the first major investigations of the statistical properties of kernel PCA can be found in \cite{Shawe-Taylor02}. The study provides concentration bounds on the sum of eigenvalues of the kernel matrix towards that of corresponding (infinite-dimensional) kernel operators. This permits to characterize the accuracy of kPCA in terms of the reconstruction error, that is the ability to preserve the information about a high-dimensional input in low dimensions. This is of significant interest for certain types of applications, such as pattern recognition. \cite{Blanchard06} further extend the results of the aforementioned study and improve the bounds on eigenvalues using tools of perturbation theory. 

Although the theoretical discussion and the mathematical approaches developed in this literature are extremely valuable, our interest is not in kPCA's ability to reconstruct a given observation. The kernel factor estimator only reduces the dimensionality and passes it to the next stage, without ever going through the reconstruction phase. As the dimensionality is reduced by projecting observations onto the eigenspace -- the space spanned by eigenfunctions of the true covariance operator with largest eigenvalues --  the interest is in convergence of empirical eigenfunctions towards the true counterparts. Importantly, the proximity of eigenvalues does not guarantee that underlying eigenspaces will also be close. 

We now briefly introduce the technical background for understanding our result. 
Let $\mathcal{H}$ be an inner product space, that is a linear vector space endowed with an inner product $\inner{\cdot,\cdot}_\mathcal{H}$, commonly denoted together as $(\mathcal{H}, \inner{\cdot,\cdot}_\mathcal{H})$. An inner product space $(\mathcal{H}, \inner{\cdot,\cdot}_\mathcal{H})$ is a Hilbert space if and only if the norm induced by the inner product $\norm{\cdot}_\mathcal{H} = \inner{\cdot. \cdot}_\mathcal{H}^{1/2}$ is complete\footnote{Specifically, the limits of all Cauchy sequences of functions must be in the Hilbert space.}. The results below rely on this norm, so we drop the subscript $\mathcal{H}$ to simplify the notation.

Let $X$ be a random variable taking values on a general probability space $\mathcal{X}$. 
A function $k: \mathcal{X} \times \mathcal{X} \to \mathbb{R}$ is a kernel if there exists a real-valued Hilbert space and a measurable feature mapping $\phi: \mathcal{X} \to \mathcal{H}$ such that $\forall x,x' \in \mathcal{X}, \; k(x,x') = \inner{\phi(x), \phi(x')}.$ A function $k: \mathcal{X} \times \mathcal{X} \to \mathbb{R}$ is a reproducing kernel of $\mathcal{H}$ and $\mathcal{H}$ is a reproducing kernel Hilbert space (RKHS), if $k$ satisfies (i) $\forall x \in \mathcal{X}, k(\cdot,x) \in \mathcal{X}$, and the reproducing property (ii) $\forall x \in \mathcal{X}, \forall f \in \mathcal{H}, \inner{f,k(\cdot,x)} = f(x)$. An important result from \cite{Aronszajn50} guarantees the existence of a unique RKHS for every positive definite $k$. 

Assume that $\E \phi(X) = 0$ and $\E\norm{\phi(X)}^2<\infty$. A unique covariance operator on $\phi(X)$, $\Sigma = \E \phi(X) \otimes \phi(X)$, satisfying $\inner{g,\Sigma h}_\mathcal{H} = \E \inner{h,\phi(X)}_\mathcal{H} \inner{g,\phi(X)}_\mathcal{H},\; \forall g,h \in \mathcal{H}$, always exists (Theorem 2.1 in \cite{Blanchard06}) and is a positive, self-adjoint trace-class operator. Denote  $\widehat{\Sigma} = \frac{1}{T} \sum_{i=1}^{T}\phi(X_i)\otimes \phi(X_i)$ to be its empirical counterpart. Finally, an orthogonal projector in $\mathcal{H}$ onto a closed subspace $V$ is an operator $\Pi_V$ such that $\Pi_V^2 = \Pi_V$ and $\Pi_V = \Pi_V^*$.

We now lay out two lemmas which lead to the result. Lemma \eqref{lem:lem1} bounds the difference between the true and empirical covariance operators.

\begin{lemma}[Difference between sample and true covariance operators]
	\label{lem:lem1} \text{}\\
	Assume random variables $X_1, \ldots, X_T \in \mathcal{X}$ are independent and $\sup_{x\in\mathcal{X}} k(x,x) \le \bar{k}$, then
	\begin{equation*}
	\mathbb{P} \left( \norm{\widehat{\Sigma} - \Sigma} \ge \left( 1+\sqrt{\frac{\epsilon}{2}} \right) \frac{2\bar{k}}{\sqrt{T}} \right) \le e^{-\epsilon}.
	\end{equation*}
\end{lemma}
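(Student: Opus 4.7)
The plan is to express $\widehat{\Sigma}-\Sigma = T^{-1}\sum_{i=1}^T W_i$ as an average of $T$ iid mean-zero self-adjoint operators $W_i := \phi(X_i)\otimes \phi(X_i) - \Sigma$ on $\mathcal{H}$, and then combine McDiarmid's bounded-differences inequality with a variance bound in the Hilbert--Schmidt norm. Two structural facts drive everything: (i) for the rank-one operator one has $\norm{\phi(X_i)\otimes\phi(X_i)} = \norm{\phi(X_i)}^2 = k(X_i,X_i) \le \bar{k}$, and applying Jensen to the defining relation of $\Sigma$ gives $\norm{\Sigma} \le \bar{k}$ as well, so $\norm{W_i} \le 2\bar{k}$; and (ii) the operator norm is dominated by the Hilbert--Schmidt norm $\norm{\cdot}_{HS}$, which is itself an inner-product norm and therefore interacts cleanly with iid sums.

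First I would apply McDiarmid to the real-valued function $f(X_1,\ldots,X_T) := \norm{\widehat{\Sigma}-\Sigma}$. Replacing one $X_i$ by an independent copy $X_i'$ perturbs $\widehat{\Sigma}$ by $T^{-1}\bigl[\phi(X_i)\otimes\phi(X_i) - \phi(X_i')\otimes\phi(X_i')\bigr]$, whose operator norm is at most $2\bar{k}/T$ by the triangle inequality together with fact (i). Hence $f$ has bounded differences with constant $c_i = 2\bar{k}/T$, and McDiarmid yields
\[
\mathbb{P}\bigl( \norm{\widehat{\Sigma}-\Sigma} \ge \E\norm{\widehat{\Sigma}-\Sigma} + t \bigr) \le \exp\!\left(-\frac{2t^2}{\sum_{i=1}^T c_i^2}\right) = \exp\!\left(-\frac{Tt^2}{2\bar{k}^2}\right).
\]
Setting the right-hand side equal to $e^{-\epsilon}$ gives the deviation $t = \bar{k}\sqrt{2\epsilon/T} = \sqrt{\epsilon/2}\cdot 2\bar{k}/\sqrt{T}$, matching the second summand of the target bound.

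It then remains to show that $\E\norm{\widehat{\Sigma}-\Sigma} \le 2\bar{k}/\sqrt{T}$, which absorbs the leading term. By Jensen together with $\norm{\cdot}\le\norm{\cdot}_{HS}$,
\[
\E\norm{\widehat{\Sigma}-\Sigma} \le \sqrt{\E\norm{\widehat{\Sigma}-\Sigma}_{HS}^2}.
\]
Because $\norm{\cdot}_{HS}$ arises from an inner product, the independent mean-zero increments $W_i/T$ are pairwise orthogonal in expectation, so a direct expansion gives
\[
\E\norm{\widehat{\Sigma}-\Sigma}_{HS}^2 = \frac{1}{T}\,\E\norm{\phi(X)\otimes\phi(X) - \Sigma}_{HS}^2 \le \frac{1}{T}\,\E\norm{\phi(X)}^4 \le \frac{\bar{k}^2}{T}.
\]
This yields $\E\norm{\widehat{\Sigma}-\Sigma} \le \bar{k}/\sqrt{T} \le 2\bar{k}/\sqrt{T}$, and combining with the McDiarmid deviation bound delivers the claimed inequality.

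The main obstacle is bookkeeping in the infinite-dimensional setting rather than any deep probabilistic content: one must confirm that the tensor products $\phi(X_i)\otimes\phi(X_i)$ genuinely lie in the Hilbert space of Hilbert--Schmidt operators (which follows from $\E\norm{\phi(X)}^2 < \infty$), that McDiarmid applies to the scalar function $f$ of iid inputs (automatic once boundedness of increments is verified in operator norm), and that the Pythagoras-style expansion of $\E\norm{\cdot}_{HS}^2$ survives in this operator Hilbert space. None of these steps is subtle in isolation, but together they form the technical core of the argument.
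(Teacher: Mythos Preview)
Your proposal is correct and follows essentially the same route as the paper: apply McDiarmid's bounded-differences inequality to $f=\norm{\widehat{\Sigma}-\Sigma}$ with increment bound $2\bar{k}/T$, then control $\E\norm{\widehat{\Sigma}-\Sigma}$ via Jensen and a variance computation exploiting the inner-product (Hilbert--Schmidt) structure on operators. Your write-up is in fact more explicit than the paper's about the role of the Hilbert--Schmidt norm in the Pythagoras step, and you obtain the slightly sharper expectation bound $\bar{k}/\sqrt{T}$ rather than the paper's $2\bar{k}/\sqrt{T}$, but the architecture is identical.
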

\begin{proof}
	See appendix \ref{app:lem1}.
\end{proof}

Note that both sigmoid and RBF kernels are bounded and hence satisfy the requirement of the above lemma. Lemma \eqref{lem:lem2} is an operator perturbation theory result and is adapted from \cite{Koltchinskii2000} and \cite{Zwald}:
\begin{lemma}\label{lem:lem2}
	Let $A,B$ be two symmetric linear operators. Denote the distinct eigenvalues of $A$ as $\mu_1 > \ldots>\mu_k>0$ and let $\Pi_i$ be the orthogonal projector onto the i-th eigenspace. For a positive integer $p\le k$  define $\delta_p(A) := \min \{\abs{\mu_i - \mu_{j}}: 1\le i < j \le p+1 \}$. Assuming $\norm{B} < \delta_p(A)/4$, then 
	\[ \norm{\Pi_i(A) - \Pi_i(A+B)} \le \frac{4\norm{B}}{\delta_i(A)}. \]
\end{lemma}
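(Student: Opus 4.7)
The plan is to use the holomorphic functional calculus: represent $\Pi_i(A)$ and $\Pi_i(A+B)$ as contour integrals of the resolvents $(A-zI)^{-1}$ and $(A+B-zI)^{-1}$ along a common curve $\Gamma$ that isolates $\mu_i$ from the rest of the spectrum of both operators, and then estimate the integral of the resolvent difference. This is the standard Davis--Kahan/Kato route for bounding the perturbation of a spectral projector by the gap between the corresponding eigenvalue and the rest of the spectrum.

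First, I would choose $\Gamma$ to be the positively-oriented circle centered at $\mu_i$ with radius $\delta_i(A)/2$. The definition of $\delta_i(A)$ ensures that every other $\mu_j$ with $j \le i+1$, $j\ne i$, lies at distance at least $\delta_i(A)$ from $\mu_i$, while every $\mu_j$ with $j > i+1$ is bounded above by $\mu_{i+1}$ and so lies even farther. Hence $\Gamma$ encloses exactly $\mu_i$ within $\mathrm{spec}(A)$. Using the Riesz representation $\Pi_i(A) = -\tfrac{1}{2\pi \mathrm{i}} \oint_{\Gamma}(A-zI)^{-1}dz$, I would then write the same expression for $A+B$, after checking via Weyl's inequality that, because $\norm{B} < \delta_p(A)/4 \le \delta_i(A)/4$, the perturbed spectrum inside $\Gamma$ still consists of exactly those eigenvalues of $A+B$ that form its $i$-th eigenspace.

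Second, subtracting the two integral representations and invoking the second resolvent identity
\[ (A-zI)^{-1} - (A+B-zI)^{-1} = (A+B-zI)^{-1} B (A-zI)^{-1} \]
expresses $\Pi_i(A) - \Pi_i(A+B)$ as a contour integral of a product containing $B$. Estimating this by the usual $ML$-bound reduces the problem to controlling $\sup_{z\in\Gamma}\norm{(A-zI)^{-1}}$ and $\sup_{z\in\Gamma}\norm{(A+B-zI)^{-1}}$. For a self-adjoint operator $T$ one has $\norm{(T-zI)^{-1}} = 1/\mathrm{dist}(z,\mathrm{spec}(T))$, so on $\Gamma$ the first resolvent norm equals exactly $2/\delta_i(A)$, and the second is at most $4/\delta_i(A)$ (since $\mathrm{dist}(z,\mathrm{spec}(A+B)) \ge \delta_i(A)/2 - \norm{B} > \delta_i(A)/4$). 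Combining these with $\mathrm{length}(\Gamma) = \pi\delta_i(A)$ produces exactly the bound $4\norm{B}/\delta_i(A)$.

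The main obstacle is not any single estimate but rather being scrupulous about the functional-analytic setup that makes the Riesz integral meaningful in infinite dimensions: one must ensure that $\mu_i$ is an isolated eigenvalue of finite multiplicity (guaranteed in the application by the compactness/trace-class nature of the covariance operator together with the distinct-eigenvalue hypothesis), that the contour integral converges in operator norm, and that the Riesz projector it defines coincides with the orthogonal projector $\Pi_i$. Once this infrastructure is in place, both the existence of the analogous integral for $A+B$ and its agreement with $\Pi_i(A+B)$ follow from the spectral-gap hypothesis $\norm{B} < \delta_p(A)/4$, and the computational part of the bound reduces to the routine $ML$-estimate above.
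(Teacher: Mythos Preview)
Your contour-integral argument is correct and is essentially the classical Kato perturbation approach; this is precisely the route taken in Koltchinskii and Gin\'e (2000), to which the paper defers for the proof. The paper itself does not give an independent argument, so your write-up fills in what the citation covers, and the arithmetic (circle of radius $\delta_i(A)/2$, resolvent bounds $2/\delta_i(A)$ and $4/\delta_i(A)$, arclength $\pi\delta_i(A)$) lands exactly on the constant $4$.
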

\begin{proof}
	See appendix \ref{app:lem2}.
\end{proof}

Finally, the following theorem bounds the difference between empirical and true eigenvectors.

\begin{theorem}\label{thm:thm2}
	Denote the $i$-th eigenvectors of $\widehat{\Sigma}$ and $\Sigma$ as $\widehat{\psi}_i$ and $\psi_i$ respectively. Then, under the assumptions of Lemma \ref{lem:lem1} and \ref{lem:lem2}, as $T\to \infty$ we have
	\[ \norm{\widehat{\psi}_i-\psi_i} = o_p(1)\]
\end{theorem}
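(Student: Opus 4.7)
The plan is to chain the two lemmas and then reduce a projector bound to an eigenvector bound. Concretely, apply Lemma \ref{lem:lem1} to obtain $\norm{\widehat{\Sigma}-\Sigma} = o_p(1)$, feed this into Lemma \ref{lem:lem2} (with $A=\Sigma$ and $B=\widehat{\Sigma}-\Sigma$) to get $\norm{\widehat{\Pi}_i-\Pi_i} = o_p(1)$ for the orthogonal projectors onto the respective $i$th eigenspaces, and finally deduce $\norm{\widehat{\psi}_i-\psi_i}=o_p(1)$ from the fact that a one-dimensional projector pins down its unit eigenvector up to sign.

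First, fix $\eta\in(0,1)$ and apply Lemma \ref{lem:lem1} with $\epsilon=\log(1/\eta)$: with probability at least $1-\eta$ we have $\norm{\widehat{\Sigma}-\Sigma}\le \bigl(1+\sqrt{\log(1/\eta)/2}\bigr)\cdot 2\bar k/\sqrt{T}$. Letting $T\to\infty$ yields $\norm{\widehat{\Sigma}-\Sigma} = O_p(T^{-1/2})$, and in particular $\norm{\widehat{\Sigma}-\Sigma} \overset{p}{\to} 0$. The theorem implicitly requires the $i$th eigenvalue of $\Sigma$ to be simple for $\psi_i$ to be well defined up to sign, so $\delta_i(\Sigma)>0$ is a fixed positive quantity. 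Since $\norm{\widehat{\Sigma}-\Sigma}=o_p(1)$, the event $\{\norm{\widehat{\Sigma}-\Sigma} < \delta_i(\Sigma)/4\}$ has probability tending to one; on this event Lemma \ref{lem:lem2} gives
\[
\norm{\widehat{\Pi}_i - \Pi_i} \;\le\; \frac{4\norm{\widehat{\Sigma}-\Sigma}}{\delta_i(\Sigma)} \;=\; O_p(T^{-1/2}),
\]
where $\widehat{\Pi}_i$ and $\Pi_i$ are the orthogonal projectors onto the $i$th eigenspaces of $\widehat{\Sigma}$ and $\Sigma$, respectively.

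For the final step, because the eigenspaces are one dimensional, $\Pi_i$ and $\widehat{\Pi}_i$ are the rank-one operators $\psi_i\otimes\psi_i$ and $\widehat{\psi}_i\otimes\widehat{\psi}_i$. Choose the sign of $\widehat{\psi}_i$ so that $c := \inner{\widehat{\psi}_i,\psi_i}\ge 0$. A direct computation on the (at most) two-dimensional subspace spanned by $\psi_i$ and $\widehat{\psi}_i$ gives the operator norm identities $\norm{\widehat{\Pi}_i - \Pi_i}^2 = 1 - c^2$ and $\norm{\widehat{\psi}_i - \psi_i}^2 = 2(1-c)$. Since $c\ge 0$ implies $1-c\le 1-c^2$, we conclude $\norm{\widehat{\psi}_i - \psi_i} \le \sqrt{2}\,\norm{\widehat{\Pi}_i - \Pi_i} = o_p(1)$, as claimed.

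The routine part of the argument is the chaining of the two lemmas; the only genuinely delicate point is the last step. Lemma \ref{lem:lem2} controls only spectral projectors, whereas the theorem is stated in terms of eigenvectors, which are unique only up to a sign (or, more generally, up to a unitary on each eigenspace). The sign-matching device above is what bridges the two, and it relies on the implicit simplicity of the $i$th eigenvalue of $\Sigma$; without it one could only guarantee that $\widehat{\psi}_i$ approaches the $i$th eigenspace of $\Sigma$ rather than a single vector.
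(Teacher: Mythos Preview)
Your proof is correct and follows essentially the same route as the paper: apply Lemma~\ref{lem:lem1} to get $\norm{\widehat{\Sigma}-\Sigma}=O_p(T^{-1/2})$, feed this into Lemma~\ref{lem:lem2} to bound the projector difference, and then convert the projector bound into an eigenvector bound via the sign-matching inequality $2(1-c)\le 2(1-c^2)$ with $c=\inner{\psi_i,\widehat{\psi}_i}\ge 0$. The paper writes the projector identity as $\norm{\Pi_i-\widehat{\Pi}_i}^2=2-2c^2$ (the Hilbert--Schmidt value) whereas you use the operator-norm value $1-c^2$; this discrepancy is immaterial for the $o_p(1)$ conclusion, and your version is arguably the more natural one given that Lemma~\ref{lem:lem2} is stated in operator norm.
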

\begin{proof}
	See appendix \ref{app:thm2}.
\end{proof}

Some comments are in order. First, note that we require the eigenvalues to be distinct, a well-known restriction (similar to $sin(\theta)$ theorem of \cite{Davis70}), since it is impossible to identify eigenspaces with the same eigenvalues. Second, Theorem \ref{thm:thm2} suggests that the eigenspace estimated by kernel PCA will concentrate close to the true eigenspace. Our kernel factor estimator simply projects onto that eigenspace and hence the precision is expected to increase as $T\to\infty$. Third, this rate does not address the case when variables exhibit dependence, although the exercise in the next section is indicative of some form of concentration, which suggests that the theoretical assumptions might be too conservative. Lastly, it may be possible to obtain a sharper bound since the proof relies on crude inequalities (e.g. triangle inequality).


\section{Empirical Evaluation}

\subsection{Forecasting Models}
This subsection discusses specific forms of equations \eqref{eq:eq1} and \eqref{eq:eq2} that are used for forecasting. Autoregressive Diffusion Index (ARDI) model is specified as
\begin{equation}
	Y_{t+h} = \beta_0^h + \sum_{p=1}^{P_t^h} \beta_{Y,p}^h Y_{t-p+1} + \sum_{m=1}^{M_t^h} \underset{1\times K_t^h}{\beta_{F,m}^{h}}\hspace{-2pt}' \underset{\hspace{-5pt}K_t^h\times 1}{F_{t-m+1}} + \epsilon_{t+h},
\end{equation}
where superscript $h$ indicates dependence on the time horizon. Note, $P_t^h, M_t^h, K_t^h$ are the number of lags of the target variable, the number of lags of factors, the number of factors respectively. These three parameters are estimated simultaneously for each time period and time horizon using BIC. Since the true factors are unknown, we instead plug in the estimates from the factor equation, which is discussed next.

Three factor equation specifications are considered,
\begin{equation}\label{eq:eq15}
	\underset{N\times 1}{X_t} = \Lambda F_t + e_t,
\end{equation} \vspace{-30pt}
\begin{equation}\label{eq:eq16}
	\underset{2N\times 1}{X_{*,t}} = \Lambda_* F_{*,t} + e_{*,t},
\end{equation}
\begin{equation}\label{eq:eq17}
	\underset{M\times 1}{\varphi(X_t)} = \Lambda_\varphi F_{\varphi,t} + e_{\varphi,t}.
\end{equation}
Factors in equation \eqref{eq:eq15} are estimated by PCA. Equation \eqref{eq:eq16} is similar, replacing the left-hand side with an augmented vector $X_{*,t}=[X_t,X_t^2]$. This procedure was dubbed as squared principal components (SPC) in \cite{Bai2008}. Finally, the last equation applies nonlinearity induced by the selected kernel and is estimated by kPCA.

Forecasts using PCA and SPC are produced in three steps. First, we extract three factors from the set transformed and standardized predictors using one of the two methods. Second, three parameters are determined according to BIC for each out-of-sample forecasting period and each prediction horizon: the number of lags of the target variable $P^h_t$, the number of lags of factors $M^h_t$, the number of factors $K^h_t$. Third, the forecasting equation is estimated by least squares and forecasts are produced. The procedure for predicting with kPCA is similar, except there is an additional step where the value of the hyperparameter is specified, and the estimation is instead made in accordance with Algorithm \ref{alg:alg1}.

\begin{algorithm}[htb]
	\setstretch{1}
	\SetAlgoLined
	\KwIn{Observations $X_1,\ldots,X_T \in \mathbb{R}^N$, kernel function $k(\cdot,\cdot)$, dimension $r$.}
	\For{$i=1,\ldots,T$}{
		\For{$j=1,\ldots,T$}{
			$K_{ij} = k(X_i,X_j)$ \tcc*[r]{Compute similarities}}}
	$K = K - \mathbf{1}_{1/T}K - K\mathbf{1}_{1/T} + \mathbf{1}_{1/T}K\mathbf{1}_{1/T}$ \tcc*[r]{Standardization}
	$[A,\Lambda] = K/T$ \tcc*[r]{Eigendecomposition}
	$\widehat{F}_r = KA_r$ \tcc*[r]{Compute factors}
	\KwOut{$T\times r$ matrix $\widehat{F}_r$.} 
	\caption{kPCA Algorithm}\label{alg:alg1}
\end{algorithm} 

The algorithm starts by computing similarities via a given kernel.  This induces transformed observation that need to be demeaned according to equation \eqref{eq:eq5}.
As in ordinary PCA, one then needs to compute the eigenvectors. The difference here is that our object is a Gram matrix, not a covariance matrix. The kernel factors are then simply derived as projections of the kernel matrix onto $r$ eigenvectors associated with the largest eigenvalues.

\subsection{Data and Forecast Construction}
As an empirical investigation, we examine whether using kernel factors leads to improved performance in forecasting several key macroeconomic indicators. We use a large dataset from FRED-MD (\cite{McCracken}), which has become one of the classical datasets for empirical analysis of big data. Its latest release consists of 128 monthly US variables running from $1959:01$ through $2020:04$, 736 observations in total. Following previous studies, we set $1960:01$ as the first sample, leaving 724 observations. Since the models presented in this study require stationary series, each of the variables undergo a transformation to achieve stationarity. The decision on a particular form of transformation is generally dependent on the outcome of a unit root test, which is known to lack power in finite samples. So instead, following \cite{McCracken}, all interest and unemployment are assumed to be $I(1)$, while price indexes are assumed to be $I(2)$. The transformations applied to each series are described in supplemental materials. 

We aim to predict a single time series from this dataset by utilizing the remaining variables. The series to be predicted include 8 variables characterizing different aspects of the economy. Specifically, we take one series from each of the eight variable ``groups" in the dataset.
The summary is provided in Table~\hyperref[app:series]{A1} in Appendix.

Forecasts are constructed for $h=1,3,6,9,12,18,24$ months ahead with a rolling time window, the size of which is taken to be $120-h$. Thus, the pseudo-out-of-sample forecast evaluation period is $1970:01$ to $2020:04$, which is 604 months. We estimate 6 variants of autoregressive diffusion index models. The first model, taken as a benchmark, is a classical ARDI with PC estimates. Several studies have documented a strong performance of this model (see for example, \cite{Stevanovich2019}). The second and third take SPC and so-called PC-squared ($PC^2$) estimates (\cite{Bai2008}) respectively, where the latter is identical to the first model with squares of factor estimates added in the forecasting equation. The remaining models are based on  kPCA estimates with three different kernels: a sigmoid $k(\mathrm{x_i}, \mathrm{x_j}) = \tanh(\gamma(\mathrm{x_i}'\mathrm{x_j})+1)$, a radial basis function (RBF) $k(\mathrm{x_i}, \mathrm{x_j}) = e^{-\gamma\norm{\mathrm{x_i} - \mathrm{x_j}}^2}$ and a quadratic\footnote{Polynomial kernels of lower and higher order demonstrated poor forecasting ability and are not included.} polynomial (poly(2)) kernel $k(\mathrm{x_i}, \mathrm{x_j}) = (\mathrm{x_i}'\mathrm{x_j}+1)^d, \; d=2$. 

Optimal parameters for each model at each step, $P^h_t,M^h_t,K^h_t$ and the kernel hyperparameter, are determined within the rolling window period, that is our setup only permits the information set that would be available at the moment of making a prediction. Specifically, $P^h_t,M^h_t,K^h_t$ are selected by BIC (maximum value allowed for each is set equal to $6$) for each out-of-sample period, while $\gamma$ is determined over a grid of values by so-called time series cross-validation. Specifically, we consecutively predict the latest 5 available observations and select the hyperparameter that minimizes the average error. The standard cross-validation may not theoretically be fully adequate due to the presence of serial correlation in the data and several approaches were suggested to correct it (\cite{Racine00}). 

\subsection{Results}
The main empirical findings are presented in Table \ref{tab:tab1}. The subtitles of series indicate the names of dependent variables, while each value in the table represents the ratio of out-of-sample MSPE of a given estimation method to out-of-sample MSPE of the autoregression augmented by diffusion indexes estimated by PCA. The results range for $8$ variables across $7$ different forecast horizons. Our results are reproducible: in supplemental materials we provide the script written in Python $3.6$ that generates all results within a few hours.

\begin{center} 
	\setstretch{1}
	\begin{longtable}{lrrrrrrr}
		\caption{Relative MSPEs for 8 variables across 7 different prediction horizons. \vspace{10pt} \\
		\small \centering
		Each value represents the ratio to out-of-sample MSPE of the autoregression augmented by diffusion indexes estimated by PCA . An asterisk indicates the best performer for each horizon (no asterisk indicates the superiority of the baseline method). Values printed in boldface suggest statistical significance of the Diebold-Mariano test of equal predictive ability at 90\%, when the corresponding method is compared against the autoregression with PCA estimates.} 	\label{tab:tab1}\\ 
		\toprule  {} &     $h=1$ &     $h=3$ &     $h=6$ &     $h=9$ &    $h=12$ &    $h=18$ &    $h=24$ \\  \midrule \endfirsthead
		\multicolumn{8}{c}
		{{\tablename\ \thetable{} \textit{-- continued from previous page}}} \\
		\toprule  {} &     $h=1$ &     $h=3$ &     $h=6$ &     $h=9$ &    $h=12$ &    $h=18$ &    $h=24$ \\  \midrule \\  \endhead
		\multicolumn{8}{r}{{\textit{Continued on next page}}} \\ \midrule
		\endfoot
		\endlastfoot
		\multicolumn{8}{c}{Real Personal Income}\\
		\cmidrule(r){1-8}
		SPC         &  \textbf{1.0327} &  \textbf{1.3195} &  1.5310 &  1.8579 &  1.3750 &   1.1647 &   1.1979 \\
		PC$^2$          &  1.0130 &  1.1736 &  1.2709 &  1.0859 &  1.2503 &  \textbf{38.1598} &  \textbf{19.4460} \\
		kPCA poly(2) &  \textbf{1.0769} &  \textbf{2.2967} &  \textbf{3.1602} &  \textbf{2.5349} &  1.9702 &   2.2061 &   1.3281 \\
		kPCA sigmoid &  1.0013 &  0.9886$^*$ &  \textbf{0.9650}$^*$ &  0.8835$^*$ &  0.8891$^*$ &   0.8377$^*$ &   \textbf{0.8688}$^*$ \\
		kPCA RBF     &  0.9995$^*$ &  0.9972 &  0.9977 &  0.9318 &  \textbf{0.9367} &   0.9269 &   \textbf{0.9890} \\
		\cmidrule(r){1-8}
		\multicolumn{8}{c}{Civilian Employment}\\
		\cmidrule(r){1-8}
		SPC         &  1.0257 &  \textbf{1.4758} &  1.6383 &  2.2547 &  1.9158 &   1.3148 &    1.3624 \\
		PC$^2$          &  0.9871$^*$ &  1.1357 &  1.2180 &  1.2806 &  \textbf{1.4784} &  \textbf{20.8259} &  \textbf{386.2726} \\
		kPCA poly(2) &  \textbf{1.4307} &  \textbf{1.6269} &  2.7371 &  \textbf{2.0439} &  1.7979 &   1.6266 &    \textbf{1.6697} \\
		kPCA sigmoid &  1.0017 &  0.9830 &  \textbf{0.9245}$^*$ &  \textbf{0.9213}$^*$ &  \textbf{0.9084}$^*$ &   \textbf{0.9171}$^*$ &    0.9138$^*$ \\
		kPCA RBF     &  1.0004 &  \textbf{0.9758}$^*$ &  \textbf{0.9448} &  \textbf{0.9317} &  0.9381 &   1.0028 &    0.9696 \\
		\cmidrule(r){1-8}
		\multicolumn{8}{c}{Housing Starts: Privately Owned}\\
		\cmidrule(r){1-8}
		SPC         &  1.0102 &  1.5529 &  2.3532 &  1.9179 &  1.4604 &   1.2191 &   \textbf{1.7948} \\
		PC$^2$          &  \textbf{1.0978} &  1.1887 &  1.7617 &  1.8405 &  \textbf{1.3927} &  \textbf{67.1838} &  \textbf{23.4097} \\
		kPCA poly(2) &  1.0684 &  1.3471 &  2.8672 &  \textbf{1.6391} &  2.2092 &   2.2018 &   2.8618 \\
		kPCA sigmoid &  0.9953 &  0.9742 &  0.9786$^*$ &  \textbf{0.9583} &  0.9576 &   \textbf{0.9453} &   0.9771 \\
		kPCA RBF     &  0.9950$^*$ &  0.9588$^*$ &  0.9990 &  0.9225$^*$ &  \textbf{0.9412}$^*$ &   \textbf{0.8944}$^*$ &   \textbf{0.9409}$^*$ \\
		\cmidrule(r){1-8}
		\multicolumn{8}{c}{Real personal consumption}\\
		\cmidrule(r){1-8}
		SPC         &  1.0790 &  1.3103 &  1.6410 &  2.0345 &  1.5595 &   1.1362 &   1.3404 \\
		PC$^2$          &  \textbf{1.0456} &  1.1577 &  1.2311 &  1.3181 &  \textbf{1.1786} &  \textbf{11.5388} &  \textbf{19.6892} \\
		kPCA poly(2) &  \textbf{1.1722} &  2.0990 &  \textbf{1.8980} &  1.8964 &  2.5677 &   2.8196 &   1.7777 \\
		kPCA sigmoid &  1.0037 &  0.9901 &  \textbf{0.9549}$^*$ &  \textbf{0.9310}$^*$ &  \textbf{0.9072}$^*$ &   \textbf{0.9131}$^*$ &   \textbf{0.9477}$^*$ \\
		kPCA RBF     &  0.9919$^*$ &  0.9894$^*$ &  \textbf{0.9609} &  \textbf{0.9363} &  \textbf{0.9596} &   \textbf{0.9622} &   0.9799 \\
		\cmidrule(r){1-8}
		\multicolumn{8}{c}{M1 Money Stock}\\
		\cmidrule(r){1-8}
		SPC         &  1.0660 &  1.5959 &  1.3715 &  \textbf{1.5438} &  1.7104 &   1.4845 &    1.3175 \\
		PC$^2$          &  \textbf{1.3453} &  1.4641 &  1.1331 &  1.1987 &  2.5199 &  \textbf{27.0868} &  \textbf{157.3719} \\
		kPCA poly(2) &  1.9656 &  3.9986 &  2.6646 &  2.6261 &  2.5258 &   2.1058 &    1.6008 \\
		kPCA sigmoid &  1.0074 &  \textbf{0.9631} &  \textbf{0.9275}$^*$ &  \textbf{0.9183}$^*$ &  \textbf{0.9311}$^*$ &   0.8977$^*$ &    \textbf{0.8295}$^*$ \\
		kPCA RBF     &  0.9892$^*$ &  0.9994$^*$ &  \textbf{0.9663} &  \textbf{0.9569} &  \textbf{0.9756} &   \textbf{0.9521} &    0.9827 \\
		\cmidrule(r){1-8}
		\multicolumn{8}{c}{Effective Federal Funds Rate}\\
		\cmidrule(r){1-8}
		SPC         &  0.9261$^*$ &  1.4816 &  3.1258 &  2.6737 &  2.4173 &  1.7712 &   2.2228 \\
		PC$^2$          &  1.2278 &  1.3466 &  1.3976 &  2.3356 &  1.2200 &  \textbf{9.0571} &  \textbf{23.7870} \\
		kPCA poly(2) &  1.3172 &  2.2581 &  3.8848 &  2.8236 &  1.3301 &  2.8415 &   1.6740 \\
		kPCA sigmoid &  0.9921 &  0.9250$^*$ &  0.8363$^*$ &  0.8503$^*$ &  0.8654$^*$ &  \textbf{0.8094}$^*$ &   0.7950$^*$ \\
		kPCA RBF     &  1.0388 &  0.9782 &  0.9544 &  \textbf{0.9119} &  \textbf{0.9619} &  0.9335 &   0.8907 \\
		\cmidrule(r){1-8}
		\multicolumn{8}{c}{CPI: All Items}\\
		\cmidrule(r){1-8}
		SPC         &  1.0445 &  2.0341 &  2.2723 &  1.3606 &  1.3226 &   \textbf{1.3522} &   \textbf{1.3168} \\
		PC$^2$          &  1.1932 &  1.6602 &  1.2048 &  1.9320 &  1.5478 &  \textbf{12.5806} &  \textbf{34.1029} \\
		kPCA poly(2) &  1.6292 &  3.0142 &  3.9154 &  1.9549 &  1.4750 &   1.3192 &   1.2334 \\
		kPCA sigmoid     &  0.9901$^*$ &  1.0189 &  \textbf{0.9418}$^*$ &  0.9659$^*$ &  \textbf{0.9540}$^*$ &   \textbf{0.9518} &   \textbf{0.9538}$^*$ \\
		kPCA RBF     &  0.9967 &  1.0596 &  1.0101 &  1.0138 &  0.9803 &   \textbf{0.9504}$^*$ &   \textbf{0.9564} \\
		\cmidrule(r){1-8}
		\multicolumn{8}{c}{S\&P 500 Index}\\
		\cmidrule(r){1-8}
		SPC         &  \textbf{1.1235} &  1.3328 &  2.0309 &  2.0885 &  \textbf{2.0879} &   1.5848 &   2.0441 \\
		PC$^2$          &  \textbf{1.1131} &  1.2793 &  1.4291 &  1.6809 &  1.9813 &  41.7948 &  \textbf{43.1356} \\
		kPCA poly(2) &  \textbf{1.6190} &  \textbf{1.7477} &  \textbf{2.5330} &  2.3499 &  2.0381 &   1.4199 &   \textbf{1.3398} \\
		kPCA sigmoid &  0.9890$^*$ &  \textbf{0.9675}$^*$ &  0.9336$^*$ &  0.8708$^*$ &  0.8634$^*$ &   0.8125$^*$ &   \textbf{0.8749}$^*$ \\
		kPCA RBF     &  1.0001 &  0.9877 &  \textbf{0.9701} &  0.9683 &  0.9450 &   0.8319 &   0.9441 \\
		\cmidrule(r){1-8}
	\bottomrule
	\end{longtable} 
	\end{center}


Some comments are in order. First, sigmoid and RBF kernel approaches do lead to improved forecasting accuracy, especially at medium- and long-term time horizons. One of these two approaches dominates others in nearly 95\% of the cases considered. Additionally, Diebold-Mariano test (\cite{Diebold2002}) of equal predictive ability suggests that only these two methods are capable of often significantly outperforming the baseline PCA autoregression across a range of variables and horizons, while never being dominated. The kernel method is least advantageous for one-step-ahead forecasting. The phenomenon that linearity is hard to beat in a very short horizon is rather well known in the literature. Luckily, as was shown in Proposition \ref{prop2}, kPCA is capable of mimicking a linear PCA by adjusting the kernel hyperparameter closer to 0, which often leads to the parity of the two methods in near-term forecasting. While the gains are not pronounced at $h=1$, they become apparent at longer horizons. Results vary across variables, but the improvement is prevailing at medium-term horizons and is uniform in one-year and longer predictions. The superiority exhibited by kPCA in many cases is remarkable for macroeconomic forecasting literature. 

Second, both SPC and PC$^2$ perform substantially worse than a simple PCA. This result contradicts to \cite{Bai2008}, but is consistent with a recent empirical comparison of \cite{Exterkate}. Besides, PC$^2$ is extremely unreliable for long-term predictions. Similar to SPC, poly(2) kernel seeks to model the second-order features of the data and, as a result, often performs on par with SPC. As pointed out by the referee, the additional squared terms might often be leading to inefficiency. 

Ultimately, note that kPCA's computational complexity is dependent on the number of time periods for estimation $120-h$, making kPCA slightly faster in this particular exercise. Most importantly, kPCA's advantage would grow in a macroeconomic setting, where ``bigger" data (i.e. larger $N$) is becoming the norm.

\section{Concluding Remarks}
In this study we have introduced a nonlinear extension of factor modeling based on the kernel method. Although our exposition mainly focused on a feature mapping $\varphi(\cdot)$ enforcing nonlinearity, it is also convenient to think of this approach as kernel smoothing in an inner product space. That is, the kernel factor estimator implicitly relies on the weighted distances between original observations. This alternative viewpoint presumes that analyzing the variation in the inner product space, rather than the original space, may be more beneficial. This idea had a profound impact on machine learning and pattern recognition fields, especially as regards to support vector machines (SVMs). By using a positive definite kernel, one can be very flexible in the original space while effectively retaining the simplicity of the linear case in the high-dimensional feature space.

We have demonstrated that constructing factor estimates nonlinearly can be beneficial for macroeconomic forecasting. Specifically, the nonlinearity induced by the sigmoid and RBF kernels leads to considerable gains at medium- and long-term time horizons. This gain in performance comes at no substantial sacrifice, the algorithm remains scalable and computationally fast.

There are several possible extensions. First, it is interesting to see how the performance would change have we pre-selected the variables (targeting) before reducing the dimensionality. As shown in \cite{Bai2008} and \cite{Bulligan2015} this generally leads to better precision. Second, the forecasting accuracy can be compared with other nonlinear dimension reduction techniques mentioned earlier, such as autoencoders. For the latter, however, one must be aware of the possibility of implicit overfitting by tuning the network architecture. This is not an issue in the current framework as there are a lot fewer parameters to specify. Third, the static factors considered here could possibly be extended to dynamic factors (\cite{Forni2000a}), by explicitly incorporating the time domain, or ``efficient" factors, by weighing observations by the inverse of the estimated variance.

\newpage
\renewcommand{\appendixpagename}{Appendix}
\renewcommand{\thesubsection}{A.\arabic{subsection}}
\begin{appendices} 
	
\subsection{Decomposition of RBF kernel} \label{app:rbf}
Let $\mathrm{x},\mathrm{z} \in \mathbb{R}^k$ and $k(\mathrm{x},\mathrm{z}) = e^{-\gamma\norm{\mathrm{x} - \mathrm{z}}^2}$. Then through the Tailor expansion we can write
\begin{align*}
	k(\mathrm{x},\mathrm{z}) &= e^{-\gamma\norm{\mathrm{x}}^2}e^{-\gamma\norm{\mathrm{z}}^2}e^{2\gamma \mathrm{x}'\mathrm{z}} \\
	& = e^{-\gamma\norm{\mathrm{x}}^2}e^{-\gamma\norm{\mathrm{z}}^2} \sum_{j=0}^{\infty} \frac{{(2\gamma)^j} }{j!} (\mathrm{x}'\mathrm{z})^j\\
	& = e^{-\gamma\norm{\mathrm{x}}^2}e^{-\gamma\norm{\mathrm{z}}^2} \sum_{j=0}^{\infty} \frac{{(2\gamma)^j} }{j!}
	\sum_{\sum_{i=1}^k n_i = j} j! \prod_{i=1}^{k} \frac{(x_iy_i)^{n_i}}{n_i!}\\
	& = \sum_{j=0}^{\infty} \sum_{\sum_{i=1}^k n_i = j} \left( (2\gamma)^{j/2} e^{-\gamma\norm{\mathrm{x}}^2} \prod_{i=1}^{k} \frac{x_i^{n_i}}{\sqrt{n_i!}} \right) \left( (2\gamma)^{j/2} e^{-\gamma\norm{\mathrm{z}}^2}    \prod_{i=1}^{k} \frac{y_i^{n_i}}{n_i!} \right)\\
	& = \varphi(\mathrm{x})'\varphi(\mathrm{z})
  \end{align*}
	
That is, $\displaystyle \varphi_j(\mathrm{x}) = \sum_{\sum_{i=1}^k n_i = j} (2\gamma)^{j/2} e^{-\gamma\norm{\mathrm{x}}^2} \prod_{i=1}^{k} \frac{x_i^{n_i}}{\sqrt{n_i!}}, \quad j=0,\ldots,\infty.$

\subsection{Proof of Proposition \ref{prop2}}\label{app:prop2}
\begin{proof}
	(a) First show that $\underset{\gamma \to 0}{\lim} (2\gamma)^{-1} K = XX'$, or $\underset{\gamma \to 0}{\lim} (2\gamma)^{-1} k_{ij} \allowbreak = X_i'X_j, \; \forall i,j=1\ldots T$, where
	$k_{ij} = \widetilde{k}_{ij} - \frac{1}{T} \sum_{l=1}^T \widetilde{k}_{il}  - \frac{1}{T} \sum_{s=1}^T \widetilde{k}_{sj} - \frac{1}{T^2} \sum_{m,p=1}^T \widetilde{k}_{mp}$ and $\widetilde{k}_{ij} = e^{-\gamma\norm{X_i - X_j}_2^2}$. By L'Hopital's rule we can write $\underset{\gamma \to 0}{\lim} (2\gamma)^{-1} k_{ij} $ as \vspace{-10pt}
	\[-\frac{1}{2}\norm{X_i - X_j}_2^2 + \frac{1}{2T}\sum_{l=1}^T \norm{X_i - X_l}_2^2 + \frac{1}{2T} \sum_{l=1}^T \norm{X_l - X_j}_2^2 - \frac{1}{2T^2} \sum_{l,m=1}^T \norm{X_l - X_m}_2^2. \vspace{-10pt}\]
	 Next, use the fact that $X$ is centered, that is $X'\mathbf{1} = \mathbf{0}$ (zero column means) and hence $\sum_{l=1}^T X_i'X_l = \sum_{l=1}^T X_l'X_i = 0, \; \forall i=1 \ldots T$. This allows to simplify the above as\\
	$\displaystyle  -\frac{1}{2} X_i'X_i + X_i'X_j -\frac{1}{2}X_j'X_j 
	+ \frac{1}{2} X_i'X_i + \frac{1}{T}\sum_{l=1}^T X_l'X_l + \frac{1}{2} X_j'X_j 
	- \frac{1}{T} \sum_{l=1}^T X_l'X_l = X_i'X_j, \newline \forall i,j=1\ldots T,$ completing the first step. Hence, since the eigenvectors are normalized, we have $\underset{\gamma \to 0}{\lim}  (2\gamma)^{-1} K eig_r(K) = sXX' eig_r(XX')$ for $s$ equal either to $+1$ or $-1$. Second, given the SVD decomposition of $X = UDV'$, we have $XX' = VD^2V'$ and $X'X = UD^2U'$, with $D^2=L$. Thus, $XX' eig_r(XX') L^{-1/2} = UD = X eig_r(X'X).$ \qed
	
	(b) First show that $\underset{\gamma \to 0}{\lim} (\gamma(1-\tanh^2(c_0))^{-1} k_{ij} = X_i'X_j, \; \forall i,j=1\ldots T$, where $k_{ij} = \widetilde{k}_{ij} - \frac{1}{T} \sum_{l=1}^T \widetilde{k}_{il}  - \frac{1}{T} \sum_{s=1}^T \widetilde{k}_{sj} - \frac{1}{T^2} \sum_{m,p=1}^T \widetilde{k}_{mp}$ and $\widetilde{k}_{ij} = \tanh(c_0+\gamma X_i'X_j)$. By L'Hopital's rule \vspace{-10pt}
	\[\underset{\gamma \to 0}{\lim} (\gamma(1-\tanh^2(c_0))^{-1} k_{ij} 
	= X_i'X_j - T^{-1} \sum_{l=1}^T X_i'X_l - T^{-1} \sum_{l=1}^T X_l'X_j + T^{-2} \sum_{l,m=1}^T X_l'X_m, \vspace{-10pt}\] 
	which immediately leads to the result once mean-zero property is taken into account. The second step is analogous to that in (a). 
\end{proof}

\subsection{Proof of Theorem \ref{thm:thm1}} \label{app:thm1}
\begin{proof}
The proof of Theorem \ref{thm:thm1} for a linear case, $\varphi(X_t) = X_t$, is available in \cite{BaiNg2002}. For a general finite-dimensional $\varphi(\cdot)$ the result follows by applying the original theorem to a vector $\varphi(X_t)$ instead.
\end{proof}

\subsection{Bounded differences inequality}
\begin{theorem*}[\cite{Mcdiarmid89}]
	Given independent random variables $X_1, \ldots, X_n \in \mathcal{X}$ and a mapping $f: \mathcal{X}^n \to \mathbb{R}$ satisfying 
	\[ \underset{x_1, \ldots, x_n, x_i' \in \mathcal{X}}{\sup}  \abs{f(x_1, \ldots, x_i, \ldots, x_n) - f(x_1, \ldots, x_i', \ldots, x_n)} \le c_i, \] then for all $\epsilon>0$,
	\[\mathbb{P}(f(X_1,\ldots,X_n) - \E(f(X_1,\ldots,X_n))\ge \epsilon) \le e^{\frac{-2\epsilon^2}{\sum_{i=1}^{n} c_i^2}}. \]
\end{theorem*}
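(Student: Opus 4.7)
The plan is to establish this by the standard Doob martingale approach: I would decompose $f(X_1,\ldots,X_n) - \E f$ into a sum of martingale differences, show each difference has bounded conditional range using independence and the hypothesis, bound its conditional moment generating function via Hoeffding's lemma, and then combine through Chernoff's inequality iterated with the tower property.

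First, I would introduce the filtration $\mathcal{F}_i = \sigma(X_1,\ldots,X_i)$ with $\mathcal{F}_0$ trivial, and define the Doob martingale $Z_i = \E[f(X_1,\ldots,X_n) \mid \mathcal{F}_i]$, so that $Z_0 = \E f$, $Z_n = f(X_1,\ldots,X_n)$, and the martingale differences $D_i = Z_i - Z_{i-1}$ satisfy $f - \E f = \sum_{i=1}^n D_i$. Next, using independence of the $X_j$, I would observe that $\E[f \mid \mathcal{F}_i]$ can be written as $g_i(X_1,\ldots,X_i)$ where $g_i(x_1,\ldots,x_i) = \E f(x_1,\ldots,x_i,X_{i+1},\ldots,X_n)$, and similarly $Z_{i-1} = \E[g_i(X_1,\ldots,X_{i-1},X_i) \mid \mathcal{F}_{i-1}]$. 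The bounded differences hypothesis, transported through the outer expectation, gives
\[ \sup_{x,x'\in\mathcal{X}}\; \abs{g_i(X_1,\ldots,X_{i-1},x) - g_i(X_1,\ldots,X_{i-1},x')} \le c_i, \]
so that conditionally on $\mathcal{F}_{i-1}$, $D_i$ lies in an interval $[L_i, U_i]$ of length at most $c_i$ and has conditional mean zero.

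Third, I would invoke Hoeffding's lemma: any zero-mean random variable supported on an interval of length $c_i$ satisfies $\E e^{\lambda D} \le e^{\lambda^2 c_i^2/8}$ for all $\lambda \in \mathbb{R}$. Applied conditionally, this yields $\E[e^{\lambda D_i} \mid \mathcal{F}_{i-1}] \le e^{\lambda^2 c_i^2/8}$. Then Chernoff's bound combined with iterated conditioning gives, for any $\lambda > 0$,
\[ \mathbb{P}(f - \E f \ge \epsilon) \le e^{-\lambda\epsilon}\, \E\!\left[e^{\lambda \sum_{i=1}^{n-1} D_i}\, \E[e^{\lambda D_n}\mid \mathcal{F}_{n-1}]\right] \le \cdots \le e^{-\lambda\epsilon} \exp\!\left(\tfrac{\lambda^2}{8} \sum_{i=1}^n c_i^2\right). \]
Optimizing at $\lambda = 4\epsilon / \sum_{i=1}^n c_i^2$ produces the claimed bound $e^{-2\epsilon^2 / \sum_{i=1}^n c_i^2}$.

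The main technical obstacle is the clean verification of the bounded conditional range of $D_i$: one must carefully exploit independence to argue that the mapping $x \mapsto g_i(X_1,\ldots,X_{i-1},x)$ inherits the $c_i$-bounded oscillation from $f$, and that $Z_{i-1}$ is the conditional average of this mapping, so that $D_i$ sits inside its range. Once this structural step is in place, the rest is a textbook Azuma--Hoeffding argument and the only remaining ingredient is Hoeffding's lemma itself, which follows from the convexity bound $e^{\lambda x} \le \tfrac{U-x}{U-L}e^{\lambda L} + \tfrac{x-L}{U-L}e^{\lambda U}$ on $[L,U]$ combined with optimization of the resulting logarithmic moment generating function.
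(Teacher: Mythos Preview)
Your argument is correct and is precisely the standard proof of the bounded differences inequality: Doob martingale decomposition, bounded conditional range via independence and the oscillation hypothesis, Hoeffding's lemma conditionally, and the Chernoff--Azuma telescoping. This is in fact the route McDiarmid himself takes.

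The paper, however, does not prove this statement at all. It is recorded in the appendix as a cited result from \cite{Mcdiarmid89} and invoked as a black box in the proof of Lemma~\ref{lem:lem1}. So there is nothing to compare: you have supplied a full (and correct) proof where the paper simply quotes the literature.
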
\label{thm:bdi}

\subsection{Proof of Lemma \ref{lem:lem1}} \label{app:lem1}
\begin{proof}
	Let $\Sigma_x := \varphi(x)\otimes\varphi(x)$. Note that $\norm{\Sigma_x} = k(x,x) \le \bar{k}$, and hence
	\[ \underset{x_1, \ldots, x_T, x_i' \in \mathcal{X}}{\sup} \; 
	\abs{ \norm{\frac{1}{T} \sum_{x_1\ldots x_i \ldots x_T} \Sigma_{x_i} - \E\Sigma_X} - \norm{\frac{1}{T} \sum_{x_1\ldots x_i' \ldots x_T} \Sigma_{x_i} - \E\Sigma_X} } \le \]
	\[ \underset{x_i \in \mathcal{X}}{\sup} \; \frac{1}{T} \abs{\norm{\Sigma_{x_i} - \E\Sigma_X}} \le \frac{2\bar{k}}{T}. \]
	Thus, by bounded difference inequality (\cite{Mcdiarmid89}) we have \[\mathbb{P}\left(\norm{\widehat{\Sigma}-\Sigma} -  \E\left(\norm{\widehat{\Sigma}-\Sigma}\right) \ge 2\bar{k}\sqrt{\frac{\epsilon}{2T}}\right) \le e^{-\epsilon}.\]
	Finally,
	\[\E\left(\norm{\widehat{\Sigma}-\Sigma}\right) \le \E\left(\norm{\widehat{\Sigma}-\Sigma}^2\right)^{1/2} = T^{-1/2} \E\left(\norm{\Sigma_X-\E(\Sigma_X)}^2\right)^{1/2} \le \frac{2\bar{k}}{\sqrt{T}},\]
	since $\E\left(\norm{\Sigma_X-\E(\Sigma_X)}^2\right) = \inner{\Sigma_X-\E\left(\Sigma_X\right), \Sigma_X-\E\left(\Sigma_X\right)} \le 4\bar{k}^2$.
\end{proof}

\subsection{Proof of Lemma \ref{lem:lem2}} \label{app:lem2}
\begin{proof}
	See the proof of Lemma 5.2 in \cite{Koltchinskii2000}.
\end{proof}

\subsection{Proof of Theorem \ref{thm:thm2}} \label{app:thm2}
\begin{proof}
	Since $\psi_i,\widehat{\psi}_i$ are standardized to be unit length, we have $\inner{\psi_i,\widehat{\psi_i}}^2 \le 1$ by Cauchy-Schwarz inequality. Choosing eigenvector signs so that $\inner{\psi_i, \widehat{\psi}_i} >0$, we have
	\[ \norm{\psi_i - \widehat{\psi}_i}^2 = 2 - 2\inner{\psi_i,\hat{\psi}_i} \le 2 - 2\inner{\psi_i,\widehat{\psi}_i}^2 = \norm{\Pi_i(\Sigma) - \Pi_i(\widehat{\Sigma})}^2.\]
	Using Lemma \ref{lem:lem2},
	\[\norm{\Pi_i(\Sigma) - \Pi_i(\widehat{\Sigma})} \le 4 \delta_i^{-1}(\Sigma) \norm{\widehat{\Sigma}-\Sigma}, \]
	and hence through Lemma \ref{lem:lem1} we have
	\[ \mathbb{P}\left(\norm{\widehat{\psi}_i-\psi_i} \ge \left( 1+\sqrt{\frac{\epsilon}{2}} \right) \frac{8\bar{k}}{\sqrt{T}\delta_i(\Sigma)}\right) \le e^{-\epsilon},
	\]
	which implies the result. 
\end{proof}

\subsection{Mercer's Theorem} \label{app:Mercer}
\begin{theorem*}[\cite{Mercer}]
	Given compact $\mathcal{X} \subseteq \mathbb{R}^d$ and continuous $K: \mathcal{X}\times \mathcal{X} \to \mathbb{R}$, satisfying 
	\[ \int_\mathrm{y}\int_\mathrm{x} K^2(\mathrm{x},\mathrm{y})d\mathrm{x}d\mathrm{y} < \infty \; \text{ and } \int_\mathrm{y}\int_\mathrm{x} f(\mathrm{x})K(\mathrm{x},\mathrm{y})f(\mathrm{y})d\mathrm{x}d\mathrm{y} \ge 0, \quad \forall f\in L^2(\mathcal{X}),\] 
	where $L^2(\mathcal{X}) = \{ f: \int f^2(\mathrm{x})d\mathrm{x} < \infty \}$,
	then there exist $\lambda_1 \ge \lambda_2 \ge \ldots \ge 0$ and functions $\{ \psi_i(\cdot) \in L^2(\mathcal{X}), i = 1,2, \ldots \}$ forming an orthonormal system in $L^2(\mathcal{X})$, i.e. $\displaystyle \inner{\psi_i,\psi_j}_{L^2(\mathcal{X})} = \int \psi_i(\mathrm{x}) \psi_j(\mathrm{x})d\mathrm{x} = \mathds{1}_{\{i=j\}}$,
	such that \[K(\mathrm{x},\mathrm{y}) = \sum_{i=1}^\infty \lambda_i\psi_i(\mathrm{x}) \psi_i(\mathrm{y}), \quad \forall \mathrm{x},\mathrm{y} \in \mathcal{X}.\] 
\end{theorem*}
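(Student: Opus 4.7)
The plan is to realize $K$ as the kernel of an integral operator on $L^2(\mathcal{X})$ and invoke the spectral theorem for compact self-adjoint operators. First I would define $T_K : L^2(\mathcal{X}) \to L^2(\mathcal{X})$ by $(T_K f)(\mathrm{x}) = \int_\mathcal{X} K(\mathrm{x},\mathrm{y}) f(\mathrm{y})\, d\mathrm{y}$. The square-integrability condition on $K$ makes $T_K$ a Hilbert--Schmidt operator, hence compact. Symmetry of $K$ (which follows from the positive-definiteness hypothesis via polarization, applied to indicator functions and using continuity of $K$) gives self-adjointness of $T_K$, and the second integral condition is exactly $\inner{f, T_K f}_{L^2(\mathcal{X})} \ge 0$, i.e.\ positive semi-definiteness of $T_K$.

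Next I would invoke the spectral theorem for compact self-adjoint operators to obtain a non-increasing null sequence of non-negative eigenvalues $\lambda_1 \ge \lambda_2 \ge \cdots \ge 0$ together with an orthonormal family of eigenfunctions $\{\psi_i\} \subset L^2(\mathcal{X})$ such that $T_K \psi_i = \lambda_i \psi_i$. For each $\lambda_i > 0$ the identity $\psi_i(\mathrm{x}) = \lambda_i^{-1} \int K(\mathrm{x},\mathrm{y}) \psi_i(\mathrm{y})\, d\mathrm{y}$, combined with uniform continuity of $K$ on the compact set $\mathcal{X}\times\mathcal{X}$ and dominated convergence, shows that $\psi_i$ admits a continuous representative; the orthonormality relation $\inner{\psi_i,\psi_j}_{L^2(\mathcal{X})} = \mathds{1}_{\{i=j\}}$ is part of the spectral conclusion.

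The remaining step is to prove that $\sum_{i=1}^\infty \lambda_i \psi_i(\mathrm{x})\psi_i(\mathrm{y}) = K(\mathrm{x},\mathrm{y})$ pointwise. The standard maneuver is to define the residual kernel $K_n(\mathrm{x},\mathrm{y}) := K(\mathrm{x},\mathrm{y}) - \sum_{i=1}^n \lambda_i \psi_i(\mathrm{x})\psi_i(\mathrm{y})$ and verify that it is continuous, symmetric, and positive semi-definite (its associated operator is $T_K$ minus its orthogonal projection onto $\mathrm{span}\{\psi_1,\ldots,\psi_n\}$, whose remaining spectrum is $\{\lambda_{n+1}, \lambda_{n+2}, \ldots\} \subset [0,\infty)$). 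Evaluating on the diagonal gives $K_n(\mathrm{x},\mathrm{x}) \ge 0$, hence $S_n(\mathrm{x}) := \sum_{i=1}^n \lambda_i \psi_i(\mathrm{x})^2 \le K(\mathrm{x},\mathrm{x})$ uniformly in $\mathrm{x}$. Cauchy--Schwarz applied to the $\ell^2$ sequences $(\sqrt{\lambda_i}\psi_i(\mathrm{x}))_i$ and $(\sqrt{\lambda_i}\psi_i(\mathrm{y}))_i$ then delivers absolute pointwise convergence of the off-diagonal series. The $L^2(\mathcal{X}\times\mathcal{X})$ convergence of $\sum_i \lambda_i\,\psi_i\otimes\psi_i$ to $K$, which follows from the spectral theorem, identifies the pointwise limit with $K(\mathrm{x},\mathrm{y})$; uniform convergence (often cited alongside the theorem) follows by Dini's theorem applied to the increasing sequence $S_n$ of continuous functions on the compact set $\mathcal{X}$ with continuous limit $K(\mathrm{x},\mathrm{x})$.

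The main obstacle is exactly this last step, i.e.\ upgrading the $L^2$ spectral decomposition of $T_K$ into the pointwise kernel identity. Positive semi-definiteness of the residual kernels $K_n$ is the decisive lever, since it converts an abstract operator-theoretic statement into a concrete pointwise inequality; compactness of $\mathcal{X}$ together with continuity of $K(\mathrm{x},\mathrm{x})$ then invites Dini's theorem on the diagonal, and Cauchy--Schwarz propagates diagonal control to the full bivariate series.
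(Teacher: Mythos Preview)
The paper does not prove Mercer's theorem; it is merely stated in Appendix~A.8 as a classical result with a citation to \cite{Mercer}. Your proposal is the standard functional-analytic proof (integral operator $T_K$ is Hilbert--Schmidt hence compact, spectral theorem, then upgrade $L^2$ convergence to pointwise via positivity of the residual kernels $K_n$ and Dini on the diagonal), and it is essentially correct.

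One small caveat: your claim that symmetry of $K$ ``follows from the positive-definiteness hypothesis via polarization'' is not quite right over the reals. The quadratic form $f \mapsto \int\!\!\int f(\mathrm{x})K(\mathrm{x},\mathrm{y})f(\mathrm{y})\,d\mathrm{x}\,d\mathrm{y}$ sees only the symmetric part of $K$; any antisymmetric continuous kernel contributes zero to it, so the hypothesis as written does not force $K(\mathrm{x},\mathrm{y})=K(\mathrm{y},\mathrm{x})$. Since the conclusion $K(\mathrm{x},\mathrm{y})=\sum_i \lambda_i\psi_i(\mathrm{x})\psi_i(\mathrm{y})$ is manifestly symmetric, symmetry of $K$ must be taken as an implicit standing assumption (as it always is in the kernel-methods literature). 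With that read, your argument goes through.
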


\setcounter{table}{0}
\renewcommand{\thetable}{A\arabic{table}}
\subsection{Time Series} \label{app:series}
\begin{table}[htb!]
	\caption{Target variables from FRED-MD dataset.}
	\centering
	\begin{tabular}{lll}
	  \toprule
	  Group     & Fred-code     & Description \\
	  \midrule
	  Output \& income & \texttt{RPI}  & Real Personal Income \\
	  Labor market & \texttt{CE16OV} & Civilian Employment \\
	  Housing & \texttt{HOUST} & Housing Starts: Privately Owned \\
	  Consumption \& inventories & \texttt{DPCERA3M086SBEA} & Real personal consumption \\
	  Money \& credit & \texttt{M1SL} & M1 Money Stock \\
	  Interest \& exchange rates & \texttt{FEDFUNDS} & Effective Federal Funds Rate \\
	  Prices & \texttt{CPIAUCSL} & CPI: All Items \\
	  Stock Market & \texttt{S\&P 500} & S\&P's Common Stock Price Index \\
	  \bottomrule
	\end{tabular}
  \end{table}
\end{appendices}

\clearpage
\spacingset{1}
\bibliographystyle{apalike} 
\bibliography{bibfile}
\end{document}